\theoremstyle{definition} \newtheorem{de}{Definition}[section]
\theoremstyle{plain}      \newtheorem{te}[de]{Theorem}
\theoremstyle{remark}     \newtheorem{os}[de]{Remark}
\theoremstyle{plain}      \newtheorem{pr}[de]{Proposition}
\theoremstyle{plain}      \newtheorem{lem}[de]{Lemma}
\theoremstyle{plain}	  \newtheorem{co}[de]{Corollary}
\theoremstyle{definition} 
\theoremstyle{remark}		\newtheorem{es}[de]{Example}
\theoremstyle{remark}		
\theoremstyle{plain}        \newtheorem*{beware*}{Beware}
\newcommand{\numberset}{\mathbb}
\newcommand{\N}{\numberset{N}}
\newcommand{\R}{\numberset{R}}
\newcommand{\p}{\numberset{P}}
\newcommand{\E}{\numberset{E}}
\newcommand{\T}{\numberset{T}}
\newcommand{\F}{\mathscr{F}}
\newcommand{\Saff}{S_{t_k}^{\bm{\xi},\bm{\eta}}}
 \title{
 Transient impact from the Nash equilibrium of a permanent market impact game
 }
\author[1\footnote{ 
Corresponding author. Tel. +39.050.221.6302.
}]{ Francesco Cordoni}
\author[2,3]{ Fabrizio Lillo}
\affil[1]{
\begin{small}
Department of Economics, Royal Holloway University of London, \protect\\ Egham TW20 0EX, UK.
\protect\\ E-mail: francesco.cordoni@rhul.ac.uk
\end{small}}
\affil[2]{\begin{small} Dipartimento di Matematica, Universit\`a di Bologna, \protect\\ 
Piazza di Porta San Donato, 5 - 40126
Bologna (BO), Italy
\end{small}
} 
\affil[3]{
\begin{small}
Scuola Normale Superiore, Piazza dei Cavalieri, 7 - 56126 Pisa (PI), Italy \protect\\ E-mail: fabrizio.lillo@unibo.it
\end{small}}
\date{
Date Written: May 1, 2022; Last revised: \today
} 
\begin{document}

\maketitle


\vspace{-1cm}

\begin{abstract}
  A large body of empirical literature has shown that market impact of financial prices is transient.
 However, 
from a theoretical standpoint, the origin of this temporary nature is still unclear. 
We show that an implied transient impact arises from the Nash equilibrium between a directional trader and one arbitrageur in a market impact game with fixed and permanent impact.
The implied impact is the one that can be empirically inferred from the directional trader's trading profile and price reaction to order flow.
Specifically, we propose two approaches to derive the functional form of the decay kernel of the Transient Impact Model, one of the most popular empirical models for transient impact, from the behaviour of the directional trader at the Nash equilibrium. The first is based on the relationship between past order flow and future price change, while in the second we solve an inverse optimal execution problem. We show that in the first approach the implied kernel is unique, while in the second case infinite solutions exist and a linear kernel can always be inferred.

 
\end{abstract}

\noindent\textbf{Keywords}: Optimal Execution;
Market Impact; Transient Impact Model; Market
microstructure;  Game theory.


\noindent\textbf{Acknowledgments:} FC acknowledges the support of the projects
``How good is your model? Empirical evaluation and validation of quantitative models in economics''
funded by MIUR Progetti di Ricerca di Rilevante Interesse Nazionale (PRIN) Bando 2017 and of the Leverhulme Trust Grant Number RPG-2021-359 - ``Information Content and Dissemination in High-Frequency Trading".
The authors gratefully acknowledge Jim Gatheral and Francesco Feri for the helpful discussions and suggestions that have helped to improve the paper.

\section{Introduction}

Markets are far from being perfectly elastic and any order or trade causes prices to move, which results in a cost for traders termed market impact cost.
Therefore, in order to
minimize market impact cost, 
agents typically fragment their orders into small pieces, which are executed incrementally over the day, see, e.g., \cite{almgren2001optimal}, \cite{Moallemi}.
On the other hand, other agents might take advantage of the knowledge that a trader is purchasing a certain amount of assets progressively by buying at the beginning and selling at the end of the trader's execution using predatory trading strategies, 
\cite{Brunnermeier,Carlin}.
 All these results produce a correlation in order flows which subsequently generate and increase the market impact effect.

The relation between trades and price is known as market impact.
In the seminal work of \cite{almgren2001optimal} 
the market impact is modelled by a constant fixed over time, making market impact permanent and constant.
However, many empirical evidences \cite{Bouchaud,BFL}, \cite{Zarinelli}, \cite{Taranto}  have shown that market impact is in great part transient, i.e. the effect of order flow on prices is not constant but decays with time. One of the most popular methods to describe transient impact is the so called \emph{Transient Impact Model} (TIM) of \cite{Bouchaud,BFL}, also known as the propagator model. The TIM postulates that price is a linear combination of past order flow modulated by a decaying function of time.
Nevertheless, 
the origin of transient impact is still unclear from a theoretical point of view.

In this work, we investigate theoretically how observed transient impact can emerge as the result of the interaction between different types of agents, even when the underlying impact is fixed and permanent.
To this end we use the market impact game framework of \cite{schied2018market} which allows us to study the equilibrium 
characterizing the price dynamics in terms of the activity of two or more agents simultaneously trading, \cite{schoneborn2008trade}, \cite{schied2018market}, \cite{luo_schied} \footnote{Although market impact games suffer from Nash equilibrium oscillations, which may affect price dynamics and trigger market instability, see \cite{cordoni2020instabilities}, we set up the parameter model in such a way that these spurious oscillations are controlled and prevented.}.
In particular, we consider 
a market impact game  with {\it linear permanent} price impact between two traders who want to liquidate their position in a finite time horizon.
This market impact game model with linear permanent impact corresponds to the classical Almgren-Chriss framework generalized to two agents.
As in \cite{Moallemi}, one of the agents 
acts as an \emph{arbitrageur}, which attempts 
to make a profit by
exploiting market price 
movements caused by
the liquidation of the other agent named \emph{directional}. However, in contrast to them, we consider the symmetric information game\footnote{To be precise, in the market impact game none of the agents has private information on the fundamental value of the asset.} of \cite{schied2018market}.
The price dynamics is
obtained as the Nash equilibrium of the game.
Even if recent works have highlighted how market impact games can be studied in the more general setting with $J$ risk-averse agents trading $M$ assets, see \cite{luo_schied} and \cite{cordoni2020instabilities},
respectively, we consider the simplest setting with $J=2$ risk-neutral agents trading the same asset ($M=1$) of the original framework of \cite{schied2018market}.
We may interpret these two traders as representative agents of the market.
However, in the following we discuss also the generalization of this setting to the multi-agent case.

Once the Nash equilibrium solution of the market impact game with linear permanent impact is found, we consider it from the perspective of an external observer who looks at the price dynamics and at the execution of the directional agent and tries to estimate from this data the market impact function. More specifically, we consider two different approaches: in the first, the observer estimates the impact function from the observed relation between the trading volume of the directional agent and the price dynamics. This corresponds to the financial industry practice of estimating price impact models by regressing price realizations over past traded volumes exploiting large datasets of algorithmic executions.
In the second approach, the observer looks at the executions of the directional agent and infers which price impact he might have used for the optimal execution. This second approach has been sometimes used in the literature, see for example \cite{Zarinelli}.

In both cases, we find that the inferred impact is transient and that it is consistent with the TIM of \cite{Bouchaud,BFL}, despite the fact the equilibrium solution of the game has been obtained with a permanent impact as in \cite{almgren2001optimal}. Thus, in this setting, the transient impact is the result of how the market impact model is derived, specifically because it has been obtained by considering only part of the order flow and its relation with the price. For these reasons in the following we will term the inferred impact as {\it implied transient impact}\footnote{
To better clarify our contribution,
we remark that the purpose of this work is not 
to provide a general optimal execution model, but to exhibit 
evidence of transient impact in a suitable simple market setting, 
as described as follows.}.





{\bf Related literature.} 
Trading in financial markets can be naturally modelled as a dynamic game between agents who trade a given asset. Since trading affects price, the reward (or cost) of an agent depends on how the other trade. Thus each player is trying to anticipate and respond to the actions of the others. 
In optimal execution, if the liquidation of the large trade order is performed too quickly, other market participants may notice it and try to front-run the trade, driving up the price and potentially reducing the trader's profit. On the other hand, if the trader moves too slowly, the opportunity to execute the trade at a favourable price may vanish.
Therefore, traders must find the optimal execution strategy based on their best guess of what other market participants will do. Thus at each time the decision of a trader depends on the past actions of the other trader(s) as typically happens in a dynamic game. Moreover the process is not simply a repeated game, because the conditions change at each time and the trading decision taken at one time step affects the payoff in future time steps.\\
\indent Dynamic games are widely used in finance from investment and corporate finance problems to bankruptcy games, for an exhaustive review, see \cite{basar2018handbook}. For the above reasons, in recent years the interest toward optimal execution problems in the dynamic games literature has grown, e.g.,
\cite{MOALLEMI2012361}, \cite{Huangdgames}, \cite{dong2022stackelberg}.
Standard optimal execution algorithms \citep[e.g.,][]{almgren2001optimal} aims at minimizing market impact cost without considering the presence of other investors or, to be more precise, consider them in an aggregated and non strategic form.  \cite{Moallemi} shows that such an approach leads to optimal schedules that are quite 
unrealistic and counterproductive.  Indeed, they exhibit predictive 
behaviour which can be easily detected by arbitrageurs and therefore
they increase execution market impact cost.
Thus, in an optimal execution problem, a trader should acknowledge the simultaneous presence of other agents. 
 In \cite{Moallemi} the authors formulate the optimal execution problem as a dynamic game with asymmetric
information, with a trader and a single arbitrageur. The market impact is modelled as a linear permanent price impact
model.  The authors analyzed and computed the Bayesian equilibrium of the game numerically. \\
\indent Our setting is different, since we consider the symmetric market impact game framework of \cite{schied2018market}, where the authors show the existence and uniqueness of the related Nash equilibrium, which turns out to be deterministic with a closed-form expression. 
Moreover, this framework aligns with the optimal execution algorithm context, where a trader has to plan the liquidation schedule with an a priori strategy. 
The resulting equilibrium turns out to reduce transaction costs even when no other competitors are present during the considered optimal liquidation trading time interval in the sense of Nash equilibrium.
 In this paper, we exhibit how to relate the market impact function to the dynamic activity of agents in optimal execution problems. We show how the interaction of different traders generates an implied transient impact, even when the underlying impact is fixed and permanent.
The dynamic nature of our game is determined by the fact that 
the optimal strategy takes into account both past and future actions of the agents. To better highlight the dynamic nature of our game, in Section \ref{subsection_miope}, we also present a different version of the game where agents are assumed to be myopic, i.e. they find the Nash equilibrium sequentially at each time step, and we show that the solution is different from the one obtained from the game used in this paper.

The paper is loosely related to the ``fair pricing” theory of \cite{Waelbroeck} where an equilibrium condition is derived between liquidity providers and a broker aggregating informed orders from several funds, in which the average price paid during the execution is equal to the price at the end of the reversion phase. 
Authors connect the distribution of order size with the shape of impact trajectory and to the price reached after the end of the execution. The reversion predicted by the model is a clear sign of a partially transient nature of impact in the model. \cite{Waelbroeck} propose that the transient nature of impact is related to the fact that liquidity providers, who observe an algorithmic execution of a large trade, are uncertain whether the execution is finished or not, while in the present paper the execution horizon is fixed. Moreover \cite{Waelbroeck} do not derive the explicit form of the decay kernel.


A different modeling approach to explain the transient nature of impact is via the modeling of the Latent Limit Order Book of \cite{Donier} which assumes that each long term investor has a reservation price (to buy or to sell) that they update, due to incoming news, price changes, noise, etc. All these trading intentions constitute the latent liquidity, i.e. is not immediately posted in the public order book. When the market price hits the reservation price of a given buy (sell) investor, his order is executed. Reservation prices remain sticky during a typical memory time and impact is expected to decay as a power-law of time, reaching a small asymptotic value after times corresponding to the memory time of the market. This decay is again a sign of the transient nature of impact and, in fact, under the assumption of a small trading rate, the price dynamics in the Latent Limit Order Book coincides with the one of TIM.  Although interesting, this approach is quite different from ours, which is based on a Nash equilibrium solution between two different types of agents. 

A closer point of comparison is the recent study of \cite{vodret2020stationary}, where the authors proposed a micro-foundation for the propagator using a self-consistent equation for the propagator function derived (as a limit) by an equilibrium of an agent-based system.
However,
even if propagator like models can be seen as equilibria of suitable agent-based models, the evidence of 
\cite{vodret2020stationary} does not fully explain the typical propagator shape
of transient impact in terms of order flows derived by 
optimal schedule strategies. 

{\bf Structure of the paper.} The paper is organized as follows. In Section \ref{market_impact_game} we recall the market impact games framework and we analyze the related Nash equilibrium by showing its symmetries when the price impact is constant.
In Section \ref{sec_alternative_app} we propose the price dynamics approach to implied transient impact, whereas.
in Section 
\ref{sec_intr_tim} we show how to relate the Nash equilibrium to an equivalent optimal execution problem by presenting theoretical results which characterize the implied transient impact function.
Finally, in Section \ref{conclusion} we conclude. 
All the proofs are reported in Appendix \ref{sec_app_proof}.

\section{Market Impact Games and Transient Impact Model}\label{market_impact_game}

\subsection{The Schied and Zhang setting}
Following \cite{schied2018market}, we consider the standard framework of market impact games, i.e., 
two risk-neutral traders who want to liquidate the same asset during the same time interval $ \T=\{t_0,t_1,\ldots,t_N\},$ where
 $0=t_0<t_1<\cdots <t_N=T$.

Given a suitable probability space 
$(\Omega,(\F_t)_{t\geq0},\F,\p)$, the price dynamics is described by a right-continuous martingale, $S_t^0$, when none of the agents trade.
However, the two traders want to unwind
 a given initial position with inventory $Z\in \R$,
 where a positive (negative) inventory represents a short (long) position,
 during a given trading time grid $ \T$ and following 
 an admissible strategy, which is a sequence of random variable 
 $\bm{\zeta}=(\zeta_0,\zeta_1,\ldots,\zeta_N)$ such that
 $\zeta_k \in \F_{t_k} \text{ and bounded}\  \forall k=0,1,\ldots,N$, and
 $\zeta_0+\zeta_1+\cdots+\zeta_N=Z$, see \cite{schied2018market} for further details.
The components of $\bm{\zeta}$
represent the order flow during the trading time $t_k$ for each $k=0,1,\ldots,N$.
 We denote with $X_1$ and $X_2$ the initial 
inventories of the two considered agents, with  $\Xi=(\xi_{i,k})\in \R^{2 \times (N+1)}$ the matrix 
of the respective strategies, where 
$\bm{\xi}_{1,\cdot}=\{ \xi_{1,k} \}_{k \in \numberset{T}}$
and 
$\bm{\xi}_{2,\cdot}=\{ \xi_{2,k} \}_{k \in \numberset{T}}$
are the strategies of trader $1$ and $2$, respectively.
Thus, when the two agents place orders, the price dynamics is characterized by
the market impact. In the 
original work of \cite{schied2018market}
 it is assumed that the price impact is described by
 transient impact model of \cite{Bouchaud, BFL}, which describes the price process $S_{t}^{\Xi}$
affected by the 
strategies $\Xi$ of the two traders, i.e.,
\begin{equation}\label{eq:tim}
   S_t^{\Xi}= S_t^0 -\sum_{t_k <t} G(t-t_k)(\xi_{1,k}+\xi_{2,k}), 
   \quad \forall\ t \in \numberset{T},
\end{equation}
where $G:\R_{+}\to \R_+$ is the market impact function,  also called \emph{decay kernel}, describing the lagged price impact of a unit buy or sell order
overtime.

The objective of the agents is to minimize their expected costs given the other traders strategies, $\E[C_{\T}(\bm{\xi}_{1,\cdot}|\bm{\xi}_{2,\cdot})]$, where the 
cost function is $C_{\T}(\bm{\xi}_{1,\cdot}|\bm{\xi}_{2,\cdot})$ is described by the sum of the permanent impact and the temporary impact modeled by a quadratic term $\theta \xi_{1,k}^2$ at trading time $k$. 
More precisely,
let $(\varepsilon_i)_{i=0,1,\ldots N}$
     be an i.i.d. sequence of Bernoulli $\left(\frac{1}{2}\right)$-distributed random variables 
     that are independent of $\sigma(\bigcup_{t\geq0}\F_t)$. Then the 
     cost of $\bm{\xi}_{1,\cdot}\in \mathscr{X}(X_1,\T)$ given $\bm{\xi}_{2,\cdot}\in \mathscr{X}(X_2,\T)$
     is defined as
  \begin{equation}\label{eq:cost}
      C_{\T}(\bm{\xi}_{1,\cdot}|\bm{\xi}_{2,\cdot})=
     \sum_{k=0}^N \left(
      \frac{G(0)}{2}\xi_{1,k}^2-\Saff \xi_{1,k}+\varepsilon_k G(0) \xi_{1,k} \xi_{2,k} +
      \theta \xi_{1,k}^2
      \right)+ X_1 S_0^0,
\end{equation}
     where $\mathscr{X}(X,\T)$ is the set of admissible strategies for the initial inventory $X$ on a specified time grid $\T$. Similarly, the cost of $\bm{\xi}_{2,\cdot}$ given $\bm{\xi}_{1,\cdot} $
 is defined in analogous way.
 The Bernoulli variable $\varepsilon_k$ models the execution priority at time $t_k$, which is given to the trader who wins the independent (Bernoulli) coin toss game.
We refer to \cite{schied2018market} for a complete discussion on the definition of the cost functional. The temporary impact  
$\theta \xi_{j,k}^2$, for each trader $j$, 
models the slippage cost, even if it 
can also be interpreted 
as a quadratic transaction fee. 
In this work, we adopt the mathematical modeling of Schied and Zhang 
and we do not specify exactly what this term represents.

Since we are interested in studying the optimal strategies of the two agents, under complete and perfect information assumption, where the agents want to minimize the expected costs of their strategies, we consider the following definition of Nash equilibrium.
Given the expected costs functionals of the two agents,
the Nash Equilibrium of a market impact games
  is a pair $(\bm{\xi}_{1,\cdot}^*,\bm{\xi}_{2,\cdot}^*)$ of strategies in $\mathscr{X}(X_1,\T)\times 
  \mathscr{X}(X_2,\T)$ such that
  \[
  \begin{split}
   \E[C_{\T}(\bm{\xi}_{1,\cdot}^*|\bm{\xi}_{2,\cdot}^*)]&=\min_{\bm{\xi}_{1,\cdot}\in \mathscr{X}(X_1,\T)}
   \E[C_{\T}(\bm{\xi}_{1,\cdot}|\bm{\xi}_{2,\cdot}^*)] \text{ and }\\
   \E[C_{\T}(\bm{\xi}_{2,\cdot}^*|\bm{\xi}_{1,\cdot}^*)]&=\min_{\bm{\xi}_{2,\cdot}\in \mathscr{X}(X_2,\T)}
   \E[C_{\T}(\bm{\xi}_{2,\cdot}|\bm{\xi}_{1,\cdot}^*)].
   \end{split}
  \]
Then, \cite{schied2018market} showed that for any strictly positive definite (in the sense of Bochner) decay kernel $G$, time grid $\T$, transaction cost parameter $\theta \geq0$, and initial inventories $X_1$, $X_2$
there exists a unique Nash equilibrium 
$(\bm{\xi}_{1,\cdot}^*,\bm{\xi}_{2,\cdot}^*)$
and it is deterministic. Moreover, it is provided by
\begin{align}\label{eq_1_S&Z}
\bm{\xi}_{1,\cdot}^*&=\frac{1}{2} (X_1 +X_2)\bm{v}
+\frac{1}{2} (X_1 -X_2)\bm{w}
\\
\label{eq_2_S&Z}
\bm{\xi}_{2,\cdot}^*&=\frac{1}{2} (X_1 +X_2)\bm{v}
-\frac{1}{2} (X_1-X_2)\bm{w},
\end{align}
where the fundamental solutions $\bm{v}$ and $\bm{w}$ are 
defined as 
$
   \bm{v}=\frac{1}{\bm{e}^T (\Gamma_{\theta}+\widetilde{\Gamma})^{-1}\bm{e}}(\Gamma_{\theta}+\widetilde{\Gamma})^{-1}\bm{e}$ and
     $\bm{w}=\frac{1}{\bm{e}^T (\Gamma_{\theta}-\widetilde{\Gamma})^{-1}\bm{e}}(\Gamma_{\theta}-\widetilde{\Gamma})^{-1}\bm{e}
$
and  $\bm{e}=(1,\ldots,1)^T \in \R^{N+1}$. 
The solutions are called fundamentals, since they 
represent the Nash equilibrium when $X_1=X_2=1$ and 
when $X_1=-X_2=1$, respectively.
The kernel matrix $\Gamma \in \R^{(N+1)\times (N+1)}$ is given by
  $
   \Gamma_{ij}=G(|t_{i-1}-t_{j-1}|), \ i,j=1,2,\ldots,N+1,
 $
 for $\theta\geq0$, $
   \Gamma_{\theta}:=\Gamma+2\theta I$,   and the matrix $\widetilde{\Gamma}$ is given by
  \[
\widetilde{\Gamma}_{ij}=
  \begin{cases}
   \Gamma_{ij} & \text{ if }i>j\\
      \frac{1}{2}G(0) & \text{ if }i=j,\\
      0 & \text{ otherwise.}
  \end{cases}
  \]
For the sake of simplicity, we refer to the previous framework as ``Schied and Zhang market-impact game".

\subsection{Market impact games with constant impact }\label{sec_ne_primary_model}

The above setting cannot be used when $G(t)\equiv G_1\in\R_+$ $\forall t$, i.e. in a
\cite{almgren2001optimal} setting. In fact, the assumption of strictly 
positive definite (in the sense of Bochner) of $t\mapsto G(|t|)$
no longer holds and the kernel matrix $\Gamma=G \bm{e}\bm{e}^T \in 
\R^{(N+1)\times (N+1)}$ is singular.
  However, the existence and uniqueness of Nash Equilibrium 
  associated with the two agents ($\bm{\xi}^*$, $\bm{\eta}^*$)
    is guaranteed when 
   $\Gamma_{\theta}=\Gamma+2\theta I$ is definite positive,   and the matrices $\Gamma_{\theta}\pm\widetilde{\Gamma}$ are invertible, 
     see proof of Theorem 1 and Lemma 3 and 4 of 
  \cite{schied2018market}.
   Thus, by the matrix determinant lemma\footnote{If $A\in\R^{N\times N}$ is a non singular square matrix and $\bm{u},\bm{v}\in \R^N$, then 
$\det(A+\bm{uv}^T)=(1+\bm{v}^T A^{-1} \bm{u})\det(A)$. 
}, see also Theorem 1 of \cite{ding2007eigenvalues}, the eigenvalues of $\Gamma_{\theta}$
are $
\lambda_1=G \cdot (N+1)+2\theta\text{ and }\lambda_{2:(N+1)}=2\theta,$
so it is definite positive as long as $\theta>0$.
$\Gamma_{\theta}-\widetilde{\Gamma}=\widetilde{\Gamma}^T+2\theta I$ is 
an upper triangular matrix where its diagonal elements are different from zero, i.e., it is 
non singular and finally $\Gamma_{\theta}+\widetilde{\Gamma}  $
is non singular if $\theta>0$, see Appendix 
\ref{app_1}.
Therefore, provided that $\theta>0$, we proved that also when $G(t)=G>0$ is constant the Nash equilibrium exists and it is 
given by the previous equations \eqref{eq_1_S&Z} and 
\eqref{eq_2_S&Z}. 

In the previous framework, we have described the market impact game model, i.e., a Schied and Zhang market-impact game with a constant impact function.
Since the market impact function $G_1\in\R_+$ is constant, without loss of generality, we may fix $G_1= 1$ and
in order to prevent market instability\footnote{The instability appears as a result of oscillating Nash equilibria, which in turn affect price dynamics, see \cite{cordoni2020instabilities} for further details. 
However, if $\theta\geq G(0)/4$ these spurious oscillations disappear, see \cite{schied2018market}.} we 
set $\theta\geq G_1/4$, e.g., $\theta=1$.

We consider the interaction between a directional seller and
an arbitrageur, where without loss of generality, we may assume that their inventory is given by
$X_{direc}=1$ and $X_{arb}=0$, respectively. 
Then, the Nash equilibrium for the directional is given by 
the average of the fundamental vectors $\bm{v}$ and $\bm{w}$.
Interestingly both the optimal strategies are symmetric in time.
All the proofs are given in Appendix \ref{sec_app_proof}.

    \begin{pr}\label{direc_solut_symm}
    Let $\theta>0$, then the fundamental solutions, $\bm{v}$ and $\bm{w}$, of a Schied and Zhang market-impact game where the market impact function $G_1$ is constant, are equal up to a time-symmetry, i.e., 
    \begin{equation}
\bm{v}_{k}=\bm{w}_{N+2-k}, \ k=1,2,\cdots,N+1.
\end{equation}
Furthermore, if we denote $(\Gamma_{\theta}+\widetilde{\Gamma})=A$,
  \begin{equation}\label{explicit_fund_solut}
  \bm{v}=
  \frac{
 A^{-T}\bm{e}}
{\bm{e}^TA^{-T}\bm{e}}, \quad
     \bm{w}=
     \frac{
  A^{-1}\bm{e}}
{\bm{e}^TA^{-1}\bm{e}},
\end{equation}
and $v_1=\frac{1}{\lambda \cdot (1-a^{N+1})}$ and 
$v_n=\frac{a^{n-1}}{\lambda \cdot (1-a^{N+1})}$ for $n=2,\ldots,N+1$,
where $\lambda=2\theta/G+\frac{1}{2}$ and $a=1-1/\lambda.$

    \end{pr}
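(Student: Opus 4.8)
The plan is to establish the three assertions in turn — the closed forms \eqref{explicit_fund_solut}, the time–symmetry $\bm v_k=\bm w_{N+2-k}$, and the geometric expression for the entries of $\bm v$ — using only elementary linear algebra together with two identities that are special to the constant kernel. First I would record those identities: for $G(t)\equiv G$ we have $\Gamma=G\,\be\be^T$, and since $\widetilde{\Gamma}$ is by construction the strictly lower–triangular part of $\Gamma$ plus half of its diagonal, $\widetilde{\Gamma}+\widetilde{\Gamma}^T=\Gamma$. Writing $A=\Gamma_\theta+\widetilde{\Gamma}$, the second identity gives $A^T=\Gamma_\theta+\widetilde{\Gamma}^T=(\Gamma_\theta-\widetilde{\Gamma})+\Gamma=(\Gamma_\theta-\widetilde{\Gamma})+G\,\be\be^T$, so $A^T$ is a rank–one update of $\Gamma_\theta-\widetilde{\Gamma}$.

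Since $\Gamma_\theta\pm\widetilde{\Gamma}$ are invertible (recalled just before the statement), the Sherman–Morrison formula gives $A^{-T}\be=\bigl(1+G\,\be^T(\Gamma_\theta-\widetilde{\Gamma})^{-1}\be\bigr)^{-1}(\Gamma_\theta-\widetilde{\Gamma})^{-1}\be$, the prefactor being nonzero by the matrix determinant lemma. Hence $A^{-T}\be$ is a nonzero scalar multiple of $(\Gamma_\theta-\widetilde{\Gamma})^{-1}\be$, and dividing through by $\be^T(\cdot)\be$ cancels that scalar; together with the fact that $\bm v$ and $\bm w$ are, by definition, the normalizations of $(\Gamma_\theta+\widetilde{\Gamma})^{-1}\be$ and $(\Gamma_\theta-\widetilde{\Gamma})^{-1}\be$, this yields \eqref{explicit_fund_solut}. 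Next I would bring in the reversal matrix $J=(\delta_{i,\,N+2-j})\in\R^{(N+1)\times(N+1)}$, for which $J^2=I$, $J^T=J$ and $J\be=\be$. A short check of the triangular pattern shows $J\widetilde{\Gamma}J=\widetilde{\Gamma}^T$, while $J\be\be^T J=\be\be^T$ and $JIJ=I$ are immediate, so $JAJ=\Gamma_\theta+\widetilde{\Gamma}^T=A^T$. Therefore $A^{-1}\be=J A^{-T}J\be=J A^{-T}\be$ and $\be^T A^{-1}\be=\be^T A^{-T}\be$, and combining with \eqref{explicit_fund_solut} shows that one fundamental solution is $J$ applied to the other, i.e. $\bm v_k=\bm w_{N+2-k}$.

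For the explicit entries I would write $A$ out for the constant kernel: $A_{ij}$ equals $2G$ below the diagonal, $\tfrac32 G+2\theta$ on it, and $G$ above. Using that $\bm v$ is the unique vector with $A\bm v=c\,\be$ for some scalar $c$ and $\be^T\bm v=1$, and substituting $\sum_{j>i}v_j=1-\sum_{j<i}v_j-v_i$, one finds $(A\bm v)_i=G\bigl(1+\sum_{j<i}v_j+\lambda v_i\bigr)$ with $\lambda=2\theta/G+\tfrac12$; imposing $(A\bm v)_{i+1}=(A\bm v)_i$ then telescopes to the first–order recursion $\lambda v_{i+1}=(\lambda-1)v_i$, i.e. $v_{i+1}=a v_i$ with $a=1-1/\lambda$. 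Hence $v_n=a^{\,n-1}v_1$, and the normalization $\sum_{n=1}^{N+1}a^{\,n-1}v_1=1$, using $1-a=1/\lambda$, gives $v_1=1/\bigl(\lambda(1-a^{N+1})\bigr)$ and the stated formula for general $n$.

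I expect the only delicate steps to be bookkeeping ones: verifying the combinatorial identity $J\widetilde{\Gamma}J=\widetilde{\Gamma}^T$ from the definition of $\widetilde{\Gamma}$, and keeping track of which of $A$, $A^T$ is attached to $\bm v$ rather than $\bm w$ through the transposes in the Sherman–Morrison step, where an index slip is easy. Everything else is a routine matrix computation and a finite geometric series, and all the invertibility that is needed has already been secured in the discussion preceding the statement.
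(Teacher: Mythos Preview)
Your argument is correct and does the job. It shares the Sherman--Morrison step with the paper, but diverges thereafter. For the time-symmetry $v_k=w_{N+2-k}$ the paper does not use your reversal matrix $J$; instead it invokes Lemma~\ref{lem_inv_gamma}, which writes out $(\Gamma_\theta-\widetilde{\Gamma})^{-1}$ explicitly, computes $A^{-1}\be$ and $A^{-T}\be$ entry by entry, and then simply observes that one list is the reversal of the other. Similarly, for the closed form of $v_n$ the paper reads the entries off from that explicit inverse and sums a geometric series, whereas you derive the recursion $v_{i+1}=a v_i$ directly from $(A\bm v)_{i+1}=(A\bm v)_i$. Your route is more structural and sidesteps the explicit inverse entirely; the paper's route has the side benefit of producing that inverse, which it reuses in later results.

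One bookkeeping point, precisely the one you flagged: the statement sets $A=\Gamma_\theta+\widetilde{\Gamma}$, but the paper's own proof actually sets $A=\Gamma_\theta-\widetilde{\Gamma}$, and it is the latter convention that makes \eqref{explicit_fund_solut} correct as displayed. With the statement's convention your Sherman--Morrison computation yields $\bm v=\frac{A^{-1}\be}{\be^T A^{-1}\be}$ and $\bm w=\frac{A^{-T}\be}{\be^T A^{-T}\be}$, i.e.\ the two formulas in \eqref{explicit_fund_solut} with the roles of $\bm v$ and $\bm w$ interchanged; so your assertion ``this yields \eqref{explicit_fund_solut}'' is off by that swap. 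This is a typo in the statement rather than a flaw in your mathematics, and it does not touch either the time-symmetry or the explicit formula for $v_n$, both of which you establish correctly and independently.
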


In the standard single-agent \cite{almgren2001optimal} framework, 
the optimal 
schedule for the directional is to trade with a constant rate over the trading periods, making the optimal execution independent from the permanent impact.
However, 
when in the market impact game
we specify a constant market impact function, the optimal solution for the directional provided by the Nash equilibrium has a U-shape, although it is assumed a permanent impact model as in
\cite{almgren2001optimal}. On the other hand, for the arbitrageur, the optimal schedule 
has a round-trip shape, see Figure \ref{fig_ne_shape}.

\begin{te}\label{NE_symm}
In a Schied and Zhang market-impact game where the two agents are a directional and an arbitrageur,
then the Nash Equilibrium 
is given by the following strategies:
\begin{align}\label{eq_1_S&Z_te}
\bm{\xi}_{direc,\cdot}^*&=\frac{1}{2} X_{direc}( \bm{v}
+\bm{w}),
\\
\label{eq_2_S&Z_te}
\bm{\xi}_{arbi,\cdot}^*&=\frac{1}{2} X_{direc}( \bm{v}
-\bm{w}),
\end{align}
where $\bm{v}$ and $\bm{w}$ are the fundamental solutions and $X_{direc}$ is the inventory of the directional agent.
Moreover, if 
the market impact function is constant and $\theta>0$, the Nash equilibrium is time-symmetric, i.e., 
\begin{align}\label{eq_symm_NE_onedim}
{\xi}_{direc,k}^*&={\xi}_{direc,N+2-k}^*, \ k=1,2,\ldots,N+1, \\
{\xi}_{arbi,k}^*&=-{\xi}_{arbi,N+2-k}^*, \ k=1,2,\ldots,N+1. 
\label{eq_symm_NE_2_onedim}
\end{align}
\end{te}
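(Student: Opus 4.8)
The plan is to obtain both assertions as direct consequences of the Schied--Zhang characterization \eqref{eq_1_S&Z}--\eqref{eq_2_S&Z} together with the symmetry of the fundamental solutions established in Proposition \ref{direc_solut_symm}.

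First, for the form of the equilibrium, I would invoke the discussion of Section \ref{sec_ne_primary_model}, where it is shown that whenever $\theta>0$ the matrices $\Gamma_{\theta}$ and $\Gamma_{\theta}\pm\widetilde{\Gamma}$ remain invertible even for a constant kernel, so that the unique deterministic Nash equilibrium of \cite{schied2018market} continues to exist and is given by \eqref{eq_1_S&Z}--\eqref{eq_2_S&Z}. I then specialize to $X_1=X_{direc}$ and $X_2=X_{arb}=0$: substituting into \eqref{eq_1_S&Z} gives $\bm{\xi}_{direc,\cdot}^*=\frac{1}{2} X_{direc}\bm{v}+\frac{1}{2} X_{direc}\bm{w}=\frac{1}{2} X_{direc}(\bm{v}+\bm{w})$, and \eqref{eq_2_S&Z} likewise yields $\bm{\xi}_{arbi,\cdot}^*=\frac{1}{2} X_{direc}(\bm{v}-\bm{w})$, which are \eqref{eq_1_S&Z_te}--\eqref{eq_2_S&Z_te}.

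For the time-symmetry, assume now the kernel is constant and $\theta>0$. By Proposition \ref{direc_solut_symm}, $v_k=w_{N+2-k}$ for every $k=1,\dots,N+1$; since $k\mapsto N+2-k$ is an involution of $\{1,\dots,N+1\}$, replacing $k$ by $N+2-k$ gives the companion identity $v_{N+2-k}=w_k$. Plugging both into the componentwise form of \eqref{eq_1_S&Z_te},
\[
\xi_{direc,N+2-k}^*=\frac{1}{2} X_{direc}\bigl(v_{N+2-k}+w_{N+2-k}\bigr)=\frac{1}{2} X_{direc}\bigl(w_k+v_k\bigr)=\xi_{direc,k}^*,
\]
which is \eqref{eq_symm_NE_onedim}; and into \eqref{eq_2_S&Z_te},
\[
\xi_{arbi,N+2-k}^*=\frac{1}{2} X_{direc}\bigl(v_{N+2-k}-w_{N+2-k}\bigr)=\frac{1}{2} X_{direc}\bigl(w_k-v_k\bigr)=-\xi_{arbi,k}^*,
\]
which is \eqref{eq_symm_NE_2_onedim}.

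The proof carries essentially no obstacle of its own: the analytic content --- existence and invertibility in the singular-kernel case, and the explicit reflection identity $v_k=w_{N+2-k}$ --- is already packaged in Section \ref{sec_ne_primary_model} and Proposition \ref{direc_solut_symm}, so the theorem is really a corollary. The only points requiring care are the index bookkeeping in the reflection $k\mapsto N+2-k$, and checking that the normalizing constants $\bm{e}^T A^{-T}\bm{e}$ and $\bm{e}^T A^{-1}\bm{e}$ appearing in \eqref{explicit_fund_solut} coincide (they do, being scalars equal to their own transpose), so that the pairing of the components of $\bm{v}$ and $\bm{w}$ is consistent.
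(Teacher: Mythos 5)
Your proposal is correct and follows essentially the same route as the paper's own proof: specialize the Schied--Zhang equilibrium formulas to $X_1=X_{direc}$, $X_2=0$, then apply the reflection identity $v_k=w_{N+2-k}$ from Proposition \ref{direc_solut_symm} componentwise. The only difference is cosmetic --- you spell out the index substitution $k\mapsto N+2-k$ that the paper leaves as ``straightforward to verify.''
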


Specifically, 
even if the Nash equilibrium of the directional agent is time-symmetric when $\theta>0$ in the market-impact games with constant market impact function, when
$\theta>\theta^*=G_1/4$ it is also positive, strictly decreasing in the first $\lfloor{(N+1)/2}\rfloor$ components and convex, i.e., it has a U-shape.
 
\begin{co}\label{cor_Ushape}
In a Schied and Zhang market-impact game where the two agents are a directional, with inventory $X_{direc}>0$, and an arbitrageur, the market impact $G_1$ is constant
and $\theta>\theta^*=G_1/4$, the Nash equilibrium of the directional, $\bm{\xi}_{direc,\cdot}^*$, has a U-shape, i.e., it is time-symmetric, positive, strictly decreasing in the first $\lfloor{(N+1)/2}\rfloor$ components and convex, where $N+1$ is the number of the trading time step.
\end{co}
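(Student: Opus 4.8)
The plan is to push everything through the closed-form expression for the fundamental solutions in Proposition~\ref{direc_solut_symm} together with the representation $\bm{\xi}_{direc,\cdot}^* = \tfrac12 X_{direc}(\bm{v} + \bm{w})$ from Theorem~\ref{NE_symm}. First I would normalise $G_1 = 1$ as in Section~\ref{sec_ne_primary_model} and record that $\theta > \theta^* = G_1/4$ is precisely the condition $\lambda := 2\theta/G_1 + \tfrac12 > 1$, which in turn is equivalent to $a := 1 - 1/\lambda \in (0,1)$; in particular $\lambda > 0$ and $1 - a^{N+1} \in (0,1)$. Using the time-symmetry $\bm{v}_k = \bm{w}_{N+2-k}$ of Proposition~\ref{direc_solut_symm} and the formula $v_n = a^{n-1}/(\lambda(1-a^{N+1}))$ for $n=1,\dots,N+1$ (the $n=1$ case reading $v_1 = 1/(\lambda(1-a^{N+1}))$), this yields
\[
\xi_{direc,k}^* \;=\; \frac{X_{direc}}{2\lambda(1-a^{N+1})}\,\bigl(a^{k-1} + a^{N+1-k}\bigr), \qquad k = 1,\dots,N+1,
\]
so the corollary reduces to elementary properties of the sequence $f(k) := a^{k-1} + a^{N+1-k}$ with $a\in(0,1)$.

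Then I would verify the four properties in turn. Time-symmetry is immediate from $f(N+2-k)=f(k)$ (and is in any case already contained in Theorem~\ref{NE_symm} for every $\theta>0$). Positivity holds because $X_{direc}>0$, $\lambda>0$, $1-a^{N+1}>0$ and $a>0$ make every factor positive. For strict monotonicity on the first half I would compute the forward difference $f(k)-f(k+1) = (1-a)\,(a^{k-1}-a^{N-k})$; since $1-a>0$ and $t\mapsto a^{t}$ is strictly decreasing, this is positive exactly when $k-1 < N-k$, i.e. $k < (N+1)/2$, and since $\lfloor (N+1)/2\rfloor - 1 < (N+1)/2$ this covers every $k\in\{1,\dots,\lfloor(N+1)/2\rfloor-1\}$, giving $\xi_{direc,1}^* > \cdots > \xi_{direc,\lfloor(N+1)/2\rfloor}^*$. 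For convexity I would take the second difference of the two geometric pieces separately, $a^{k}-2a^{k-1}+a^{k-2}=a^{k-2}(1-a)^2$ and $a^{N+2-k}-2a^{N+1-k}+a^{N-k}=a^{N-k}(1-a)^2$, so that the second difference of $\xi_{direc,k}^*$ equals $\dfrac{X_{direc}(1-a)^2}{2\lambda(1-a^{N+1})}\bigl(a^{k-2}+a^{N-k}\bigr) > 0$ for $2\le k\le N$, i.e. the schedule is strictly convex.

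The argument is essentially routine once the explicit geometric formula is in hand, so I do not expect a genuinely hard step; the only delicate point is the index bookkeeping around the middle of the grid — one has to separate the parities of $N+1$ and check that the sign condition $k<(N+1)/2$ covers exactly the claimed range of indices, and that the second-difference identity is used only for interior indices $2\le k\le N$. It is also worth noting, as a sanity check on the hypothesis, why the inequality must be strict: at $\theta=\theta^*$ one has $\lambda=1$, $a=0$, and $\bm{v}+\bm{w}$ degenerates to $(1,0,\dots,0,1)$, which is symmetric and convex but fails to be strictly decreasing on the first half unless $N+1\le 2$.
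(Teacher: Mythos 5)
Your proof is correct and follows essentially the same route as the paper's: both start from the closed-form expression $\xi_{direc,k}^*=\frac{X_{direc}}{2\lambda(1-a^{N+1})}\,(a^{k-1}+a^{N+1-k})$ obtained from Proposition~\ref{direc_solut_symm} and Theorem~\ref{NE_symm}, deduce positivity from $a\in(0,1)$, and establish monotonicity and convexity by computing the first and second differences of the geometric sequence. The only cosmetic difference is that you split the second difference into the two geometric pieces separately while the paper factors $\Delta_k-\Delta_{k-1}$ directly; the resulting expression $(1-a)^2(a^{k-2}+a^{N-k})$ is the same.
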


      \begin{figure}[t]
\centering
{\includegraphics[width=1\textwidth]{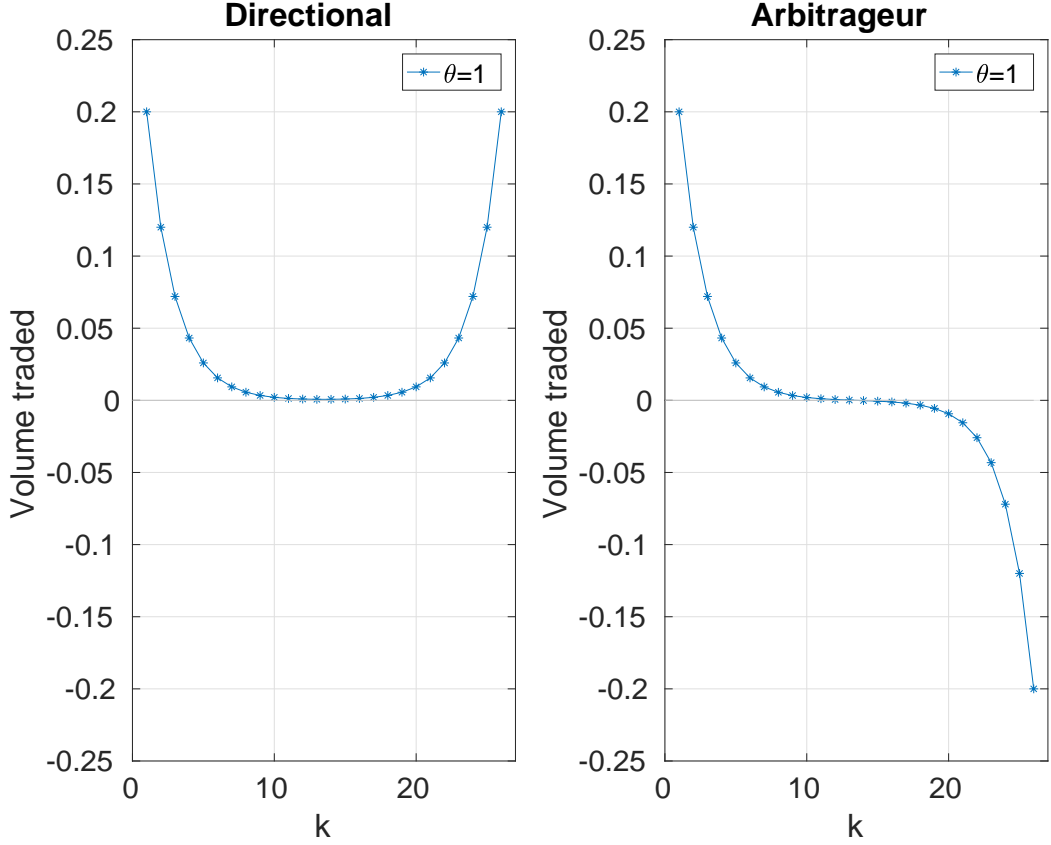}}
\caption{ Nash equilibrium $\bm{\xi}_{direc}^*$ of the directional (left)
and $\bm{\xi}_{arbi}^*$ of the arbitrageur (right) trading only one asset. The trading time grid is equidistant with 26 points and 
$\theta=1$. The market impact is constant $G_1=1.$} 
\label{fig_ne_shape}
\end{figure}

\subsubsection{Myopic market-impact game}
\label{subsection_miope}

To highlight the dynamic nature of the our game, we present here a different model where agents optimize their trading by finding the Nash equilibrium at each time interval, i.e. without considering the effect of their action on {\it future} prices (and rewards). For this reason we term the agents as myopic. 
This new game might be thought of as a repeated game problem, even if the game will still remain dynamic, due to the fact that, as we will see, the optimal actions depend on past price. We show that the solution of the myopic market-impact game is different from the fully dynamical Schied-Zhang game.

For the sake of simplicity, we focus on the case where both agents are fundamentalist and identical.
Then, if we denote by $S_{t_k}^{\bm{\xi},\bm{\eta}}$ the price at the beginning of the interval and assume a constant\footnote{The computation is straightforwardly extended to the case of a variable $G(t)$.} $G$, the cost function for agent $1$ is (see Eq. \ref{eq:cost})
\[
C(\xi_{1,k}|\xi_{2,k}) = 
      \frac{G}{2}\xi_{1,k}^2-S_{t_k}^{\bm{\xi},\bm{\eta}} \xi_{1,k}+\varepsilon_k G \xi_{1,k} \xi_{2,k} +
      \theta \xi_{1,k}^2.
\]
whose expectation is 
\[
\E[C(\xi_{1,k}|\xi_{2,k})] = \frac{G}{2}\xi_{1,k}^2-S_{t_k}^{\bm{\xi},\bm{\eta}} \xi_{1,k}+\frac{G}{2} \xi_{1,k} \xi_{2,k} +
      \theta \xi_{1,k}^2.
\]
Therefore, the best response function of agents $1$ and $2$ is:
\[
\begin{split}
\xi^{brf}_{1,k} &= \arg \min \E[C(\xi_{1,k}|\xi_{2,k})] = 
\frac{S_{t_k}^{\bm{\xi},\bm{\eta}}-\frac{G}{2} \xi_{2,k}}{G+2\theta}, \\    
\xi^{brf}_{2,k} &= \arg \min \E[C(\xi_{2,k}|\xi_{1,k})] = 
\frac{S_{t_k}^{\bm{\xi},\bm{\eta}}-\frac{G}{2} \xi_{1,k}}{G+2\theta}.
\end{split}
\]
Thus, considering 
\[
\xi^{brf}_{1,k} = \arg \min \E[C(\xi_{1,k}|\xi_{2,k})] = 
\frac{S_{t_k}^{\bm{\xi},\bm{\eta}}-\frac{G}{2} \xi^{brf}_{2,k}}{G+2\theta}
\]
we may recover the related Nash equilibrium at time $k$ for both agents
    \[
    \xi^{**}_{1,k} =\xi^{**}_{2,k}= \frac{2S_{t_k}^{\bm{\xi},\bm{\eta}}}{3G+4\theta} \equiv  \alpha S_{t_k}^{\bm{\xi},\bm{\eta}}.
    \]
    Therefore, the Nash equilibrium depends only on the price at the beginning of each interval. Notice that we have not set
any type of constraint on the inventory that each agent wants to liquidate.

The price increment in interval $k$ is
\[
S_{t_{k+1}}^{\bm{\xi},\bm{\eta}} - S_{t_{k}}^{\bm{\xi},\bm{\eta}} = - G(\xi_{1,k}^{**} +\xi_{2,k}^{**}) = -2\alpha G S_{t_{k}}^{\bm{\xi},\bm{\eta}}.
\]
thus $S_{t_k}^{\bm{\xi},\bm{\eta}} = (1-2\alpha G)^k S_0$.  If $0 < 1-2\alpha G < 1$ the price decays exponentially fast until the full inventory has been liquidated. Clearly $1-2\alpha G < 1$ because  $\alpha$ and $G$ are positive. Interestingly, $ 1-2\alpha G>0 $ avoid price oscillations, and this condition is satisfied if $\theta>G/4$, i.e., the stability condition of Schied and Zhang market impact games, see \cite{schied2018market} and \cite{cordoni2020instabilities}.

Finally, the number of time intervals is set by the condition, $X_1 = \sum_{k=1}^{N+1} \xi_{1,k}^{**}$
which gives $N = \frac{\log(1- 2 X_1 G/ S_0)}{\log(1-2\alpha G)}$, so that the number of trading rounds is fixed at the beginning, as in the Schied and Zhang game.

The solution of the myopic game is an admissible strategy for the Schied and Zhang market impact game. 
However, using Proposition \ref{direc_solut_symm}, a direct inspection shows that the Nash equilibrium of the latter is different from the solution of the former. Specifically, at the first interval it is $ \xi_{1,1}^{**}> \xi_{1,1}^{*}$, which means that the myopic traders prefer to liquidate more at the beginning. Finally, by computing the average expected cost of the whole execution, the Nash Equilibrium of the myopic version is in general suboptimal, i.e. providing a larger cost (for both traders) than the fully dynamic Schied and Zhang impact game.
\subsection{Optimal execution in the Transient Impact Model}\label{sec:timopt}

We now recall how to derive the optimal execution schedule in the standard TIM. In this case, the equation of price, similarly to Eq. (\ref{eq:tim}), is 
$$
   S_t^{\bm{\eta}}= S_t^0 -\sum_{t_k <t} G_2(t-t_k)~\eta_{k}, 
   \quad \forall\ t \in \numberset{T},
$$
where, to avoid confusion with the market impact game, we denote with $\bm{\eta}$ the trading strategy. Only one agent trade and thus there is no explicit interactions with other agents.

It is possible to show \citep{BFL,schied2018market} that the expected cost of the directional agent is 
$\E[C_T (\bm{\eta})]=\frac{1}{2} \bm{\eta}^T \Gamma_{\theta,2} \bm{\eta}$,
where $\Gamma_{\theta,2}$ is the decay matrix defined above and corresponding to the kernel $G_{2}(t)$.
 Minimizing the expected cost, the optimal solution for a directional trader with inventory $X_{direc}$ is obtained by 
\[
\bm{\eta}_{direc}^{*}=\frac{X_{direc}}{\bm{e}^T\Gamma_{\theta,2}^{-1} \bm{e} }
\Gamma_{\theta,2}^{-1} \bm{e}.
\]
Since $\Gamma_{\theta,2}$ is symmetric then it is trivial that 
$\bm{\eta}_{direc}^{*}$ is also time symmetric as $\bm{\xi}^*_{direc}$.
Moreover, it is straightforward that if we assume 
a constant market impact function $G_2$ the solution of $\bm{\eta}^*_{direc}$ is constant in time as for the classical \cite{almgren2001optimal} solution. 
In the general case, where $G_2(t)$ is a strictly positive decay kernel, 
the optimal execution is characterized by a U-shape, e.g., see Figure \ref{fig_ne_shape_TIM}.

   \begin{figure}[t]
\centering
{\includegraphics[width=0.75\textwidth]{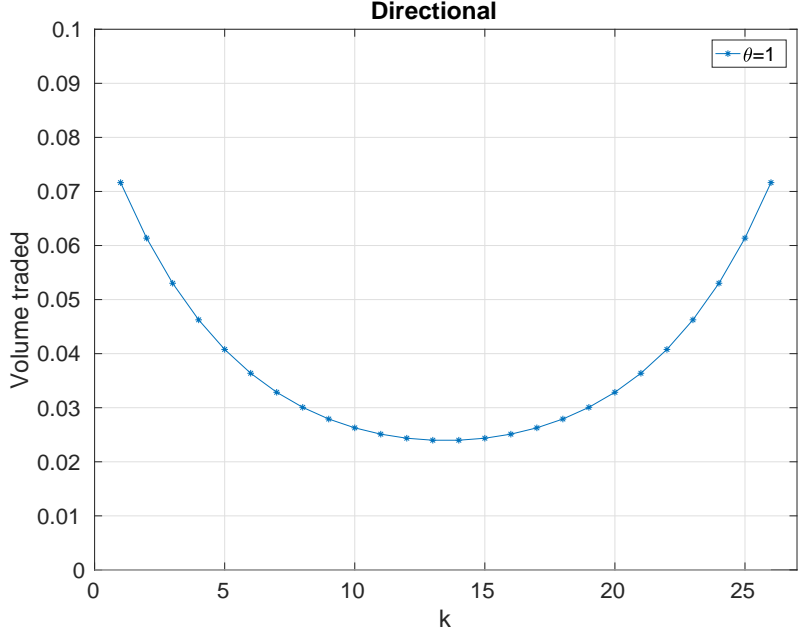}}
\caption{Optimal execution schedule $\bm{\eta}_{direc}^*$ of a directional agent in a TIM model. 
The trading time grid is equidistant with 26 points and 
$\theta=1$. The market impact function is an exponential decay kernel $G_2(t)=\exp(-t).$} 
\label{fig_ne_shape_TIM}
\end{figure}





\section{Implied Transient Impact via 
Price Dynamics}
\label{sec_alternative_app}

We first present a direct approach to 
obtain evidence of transient market impact, 
by looking at the (average) price dynamics obtained in  the market impact game
described in Section \ref{sec_ne_primary_model}.
Remind that the price dynamics on a given trading time grid 
$\T=\{t_0,t_1,\ldots,t_N\}$ is described by
\[
S_t^{\Xi}=S_t^0-\sum_{t_k <t} G(t-t_k)(\xi_{1,k}+\xi_{2,k}),
\]
where $S_t^0$ is a right-continuous martingale defined on a given probability space which acts as volatility term.
Therefore, if we discard the noise due to volatility, the above relation holds
between the expected value of price increments and order flow.
Thus, if $S_0=S_0^0$,
\[
\overline{S}_t^{\Xi}-S_0=-\sum_{t_k <t} G(t-t_k)(\xi_{1,k}+\xi_{2,k}),
\]
where $\overline{X}$ denotes the expectation of $X$, which can be recasted in a matrix form,
\begin{equation}\label{eq_alternative_system}
    \bm{S}=-C \bm{\Xi},
\end{equation}
where $\bm{S}=(\overline{S}_{t_1}^{\Xi}-S_0,\ldots, \overline{S}_{t_{N+1}}^{\Xi}-S_0)$
 is the aggregate drift, $\bm{\Xi}=\bm{\xi}_1+\bm{\xi}_2=
(\Xi_1,\ldots,\Xi_{N+1})^T$ is the aggregate order flow and 
\[
C=
\begin{bmatrix}
G(t_1-t_0) & 0 & 0 & \cdots  &\cdots   &0\\
G(t_2-t_0) & G(t_2-t_1) & 0 & 0&  \cdots  &0\\
G(t_3-t_0) & G(t_3-t_1) & G(t_3-t_2) & 0& \ddots & 0\\
\vdots & \ddots &\ddots  & \ddots & \ddots &\vdots\\
G(t_{N}-t_0) & G(t_{N}-t_1) &\cdots  &\cdots   & G(t_{N}-t_{N-1}) & 0  \\
G(t_{N+1}-t_0) & G(t_{N+1}-t_1) & G(t_{N+1}-t_2)&   \cdots &  \cdots& G(t_{N+1}-t_{N})  \\
\end{bmatrix}.
\]
Notice that
we may rewrite the above system as $\bm{S}=-\mathcal{M} \bm{g}$, where, in the case that $\T$ is an equidistant time grid,
$\bm{g}=(G(t_0),G(t_1),\ldots,G(t_{N}))^T$
and 
\[
\mathcal{M}=
\begin{bmatrix}
\Xi_1 & 0 & 0 & \cdots  &\cdots   &0\\
\Xi_2 & \Xi_1 & 0 & 0&  \cdots  &0\\
\Xi_3 & \Xi_2 & \Xi_1 & 0& \ddots & 0\\
\vdots & \ddots &\ddots  & \ddots & \ddots &\vdots\\
\Xi_N & \Xi_{N-1} &\cdots  &\cdots   & \Xi_1 & 0  \\
\Xi_{N+1} & \Xi_N & \Xi_{N-1}&   \cdots &  \cdots& \Xi_1  \\
\end{bmatrix}.
\]
If the aggregate net order flow is different from zero at $t_0$, i.e., $\Xi_1\neq0$, the matrix $\mathcal{M}$ is non singular
and therefore there is always an unique solution
\begin{equation}\label{eq_defi_alter_intrinsic_decay_kernel}
    \bm{g}=-\mathcal{M}^{-1}\bm{S}
 \end{equation}
from which we can recover the kernel $G(t)$.


Thus, one could use $\bm{\xi}_1$ and $\bm{\xi}_2$ , the Nash equilibrium solution of Schied and Zhang market impact game and obtain $\bm{S}$, from Equation \eqref{eq_alternative_system}. Then, assuming only
one agent (the directional, say agent 1) one solves Equation \eqref{eq_defi_alter_intrinsic_decay_kernel} using the order flow of the directional, i.e., $\Xi= \bm{\xi}_1$,
so that we can obtain the implied decay kernel associated with the single agent TIM
characterised by the price dynamics of the market impact games model. This procedure to infer the (transient) impact model corresponds to the usual practice, in the academia and in the financial industry, to estimate market impact from large sets of algorithmic executions by regressing price changes over past traded volumes. Equation \eqref{eq_defi_alter_intrinsic_decay_kernel} leads
to a first definition of \emph{implied transient impact function},
which we emphasise as:


\begin{de}\label{intrinsic_def_2}
(Implied  transient  impact  function -  Price Approach). The  transient  impact  function $G^{(P)}_{impl}(t)$  which satisfies 
$\bm{S}=-\mathcal{M} \bm{g}$, where $\mathcal{M}$ depends on $\bm{\xi}_{direc}$ and $\bm{S}$ is recovered by the aggregate (drift) order flows, $\Xi$, of the market impact game  is  called \emph{implied transient impact function}.
\end{de}

The implied decay is uniquely identified. 
We emphasise these results in the following theorem.

\begin{te}\label{te_alter_defin}
If the aggregate net order flow is different from zero at $t_0$, i.e., $\Xi_1\neq0$, the linear system $\bm{S}=-\mathcal{M} \bm{g}$ has unique solution for $\bm{g}$.
\end{te}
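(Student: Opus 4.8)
The plan is simply to verify that, under the stated hypothesis, the coefficient matrix $\mathcal{M}$ is non-singular; existence and uniqueness of $\bm{g}$ then follow at once, since a square linear system with invertible matrix has exactly one solution, namely $\bm{g}=-\mathcal{M}^{-1}\bm{S}$. So the whole statement reduces to computing $\det \mathcal{M}$.

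The only thing to observe is the structure of $\mathcal{M}$. By construction $\mathcal{M}\in\R^{(N+1)\times(N+1)}$ is lower triangular: its $(i,j)$ entry equals $\Xi_{i-j+1}$ for $j\le i$ and $0$ for $j>i$, so in particular every one of its diagonal entries equals $\Xi_1$, the aggregate net order flow at $t_0$. Consequently $\det\mathcal{M}$ is the product of the diagonal entries, $\det\mathcal{M}=\Xi_1^{\,N+1}$, which is nonzero if and only if $\Xi_1\neq0$. Hence, when $\Xi_1\neq0$, $\mathcal{M}$ is invertible and $\bm{S}=-\mathcal{M}\bm{g}$ admits the unique solution $\bm{g}=-\mathcal{M}^{-1}\bm{S}$, as claimed.

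Equivalently, and more constructively, one can solve the triangular system by forward substitution: the first equation determines $g_1$ uniquely as a multiple of $1/\Xi_1$, and once $g_1,\dots,g_{m-1}$ are known the $m$-th equation determines $g_m$ uniquely, again dividing by the pivot $\Xi_1\neq0$. This makes the uniqueness transparent and also shows the hypothesis is sharp: if $\Xi_1=0$ the first row of $\mathcal{M}$ vanishes, $\mathcal{M}$ is singular, and the system is not uniquely solvable. There is therefore essentially no obstacle in the proof; the single point deserving a line is the identification of $\mathcal{M}$ with a lower-triangular Toeplitz matrix of constant diagonal $\Xi_1$, which relies on the equidistant-grid assumption already used to rewrite $C\bm{\Xi}$ as $\mathcal{M}\bm{g}$.
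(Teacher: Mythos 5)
Your proof is correct and follows exactly the same route as the paper, whose one-line proof simply asserts that $\Xi_1\neq0$ makes $\mathcal{M}$ non-singular; you have merely spelled out the underlying reason (lower-triangular Toeplitz structure with constant diagonal $\Xi_1$, hence $\det\mathcal{M}=\Xi_1^{N+1}$). No gaps.
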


As a specific first example we consider the market impact  model where
the directional and the arbitrageur trade on an 
equidistant time grid $\T_N=\{\frac{kT}{N}|k=0,1,\ldots N\}$ where $T=1$,
$N=25$, $\theta=1$, with inventory equal to $1$ and $0$, respectively.
The decay kernel is set to $G_1\equiv 1$.
Then, we compute the cumulative drift $\bm{S}$ generated by the interaction of the two agents in the market impact games.
We solve the system \eqref{eq_defi_alter_intrinsic_decay_kernel} where $\mathcal{M}$ is computed by considering only the drift generated by
the directional  $\bm{\xi}_{direc}$ and we report in the left panel of Figure \ref{fig:alternativeSZ} the implied transient impact function.

Before commenting on this figure, we note that this approach can be easily extended to the case when more arbitrageurs are present. Specifically, we compute $G^{(P)}_{impl}(t)$ in a market impact game with one directional trader and two arbitrageurs, following the \cite{luo_schied} model, where the decay kernel is again set to $G_1\equiv1$. 
We recall that the Nash Equilibrium, in this case, is not anymore time-symmetric, see Figure \ref{fig_3_agents_NE} below. The right panel of Figure \ref{fig:alternativeSZ} shows the implied transient impact function in this three player game.

Figure \ref{fig:alternativeSZ} indicates that the implied market impact $G^{(P)}_{impl}(t)$ is transient and nonlinear in all settings.
Qualitatively, we observe that the implied transient impact kernels differ in terms of ``size impact", i.e., the absolute value of $G^{(P)}_{impl}(0)$ is greater for the five/three agent game than those obtained starting from the two agent game, 
but they exhibit the same shape and they are both decreasing functions. 
Therefore, to compare them fairly, we have divided each $G^{(P)}_{impl}(t)$ by the related value at zero, $G^{(P)}_{impl}(0)$.
The initial values for the (non-scaled) implied market impact functions were $2$, $3$ and $5$, for the two, three and five players game, respectively. We observe that the implied market impact function has a more sharp decline when the number of arbitrageurs increases.
We remark that in order to account the way by which impact depends on the number of agents, a proper scaling factor should be applied on both implied kernels, as the one discussed in \begin{NoHyper}\cite{cordoni2020instabilities}\end{NoHyper}. However, for our purpose is sufficient to equally compare both implied market functions as done in Figure \ref{fig:alternativeSZ}.


\begin{figure}
    \centering
    \subfloat[][]
    {\includegraphics[width=0.8\textwidth]{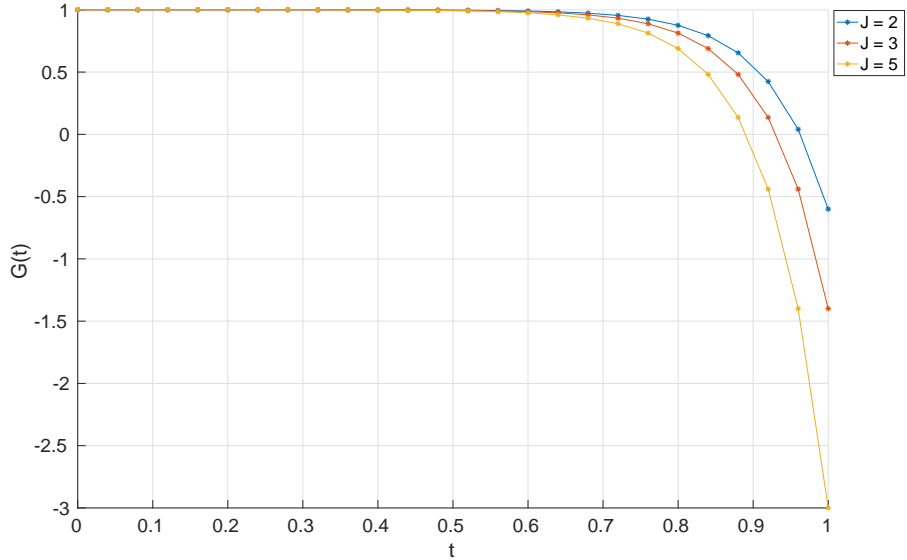}}
    \caption{Scaled implied transient impact function, computed with the price approach, solving the system \eqref{eq_defi_alter_intrinsic_decay_kernel}. The blue, red and orange lines correspond to the scaled $G^{(P)}_{impl}(t)$ obtained from a game with one directional and one arbitrageur,
    from a game with one directional and two arbitrageurs, and from a game with one directional and four arbitrageurs, respectively. Each function was scaled in such a way its initial value is equal to one. }
    \label{fig:alternativeSZ}
\end{figure}

In conclusion, following this approach, we have found evidence of the transient nature of market impact. However, the implied impact results to be a concave function, in contrast to what many empirical studies have found, e.g., \cite{Bouchaud}.
In Appendix \ref{app_altern_approach2} we investigate how we can fix this undesired feature of the implied transient impact function, by solving Eq. \eqref{eq_defi_alter_intrinsic_decay_kernel} employing the scheduling of Almgren and Chriss optimal execution model.
However, the motivation of this alternative solution is not straightforward.

In the next section we propose a different approach to recover the implied market impact function by solving an inverse optimal execution problem.



\section{Implied Transient Impact via 
Optimal Execution}
\label{sec_intr_tim}

In the second approach we propose to derive the implied transient impact from the optimal execution schedule. As in the previous section, we assume that the actual decay kernel is a constant $G_1\in \R_+$. As shown in Section \ref{sec_ne_primary_model}, the optimal solution
of the directional trader at the Nash equilibrium $\bm{\xi}^*_{direc}$
has a U-shape.
As discussed in section \ref{sec:timopt}, a U-shape is also exhibited 
 in the optimal execution problem using the transient impact model.  
Therefore, 
when a suitable transient impact function is selected,
the U-shape of $\bm{\xi}^*_{direc}$ is equivalent to optimal schedules obtained by single-agent transient impact models. 
In other words, given the optimal solution $\bm{\xi}_{direc}^*$
it is possible to select an appropriate propagator function 
so that $\bm{\xi}^*_{direc}=\bm{\eta}^*_{direc}$.
We denote this implied transient impact function as $G_{impl}(t)$,
More precisely we define:
\begin{de}[Implied transient impact function - Optimal execution approach]\label{def_intrinsic_decay_kernel}
The transient impact function $G^{(OE)}_{impl}(t)$ such that 
the optimal schedule obtained by a TIM with a single agent, $\bm{\eta}^*_{direc}$, is equal to the Nash equilibrium of the directional trader in the market impact game, $\bm{\xi}^*_{direc}$, is called \emph{implied transient impact function}.
\end{de}

Thus the question we plan to answer is: given the 
solution obtained by a market impact game, how is it possible 
to derive the corresponding implied decay kernel associated with the single-agent TIM?
Moreover, is the implied transient impact function unique?


Without loss of generality we assume $X_{direc}=1$.
Given $\bm{\xi}_{direc}^{*}\in \R^{N+1}$ we ask whether the equation 
\begin{equation}\label{eq_probl_solver}
  \bm{\xi}_{direc}^{*}=\frac{\Gamma_{\theta}^{-1} \bm{e}}{\bm{e}^T\Gamma_{\theta}^{-1} \bm{e} }
\end{equation}
has solution and 
if it is unique. 
Since $\theta>0$ is given, the only unknown is the symmetric Toeplitz matrix
$\Gamma$ such that $\Gamma_{\theta}=\Gamma+2\theta I$, which depends on $N+1$ parameters,
thus in principle there are $N+1$ equations in $N+1$ unknowns. However, it is clear that if $\Gamma$ is a solution,
then any $\Gamma+K\bm{e} \bm{e}^T$, where $K\in \R-\{\frac{-1}{\bm{e}^T\Gamma_{\theta}^{-1}\bm{e}}\}$ is a solution, as observed in the following remark.

\begin{os}\label{up_constant}
Using Sherman-Morrison formula\footnote{
If $\Gamma$ is a solution of \eqref{eq_probl_solver}, $\Gamma_{\theta}^{-1}$ is positive i.e., $\bm{x}^T\Gamma_{\theta}^{-1}\bm{x}>0\ \forall \bm{x}\in \R^{N+1}$, see e.g. Lemma 2 of \cite{schied2018market}. Therefore $1+K \bm{e}^T\Gamma_{\theta}^{-1}\bm{e}\neq0$ if and only if $K\neq \frac{-1}{\bm{e}^T\Gamma_{\theta}^{-1}\bm{e}} $.}, it is straightforward that
$$ (\Gamma_{\theta}+K \bm{e} \bm{e}^T)^{-1}\bm{e}=
\Gamma_{\theta}^{-1} \bm{e}-
\frac{K\Gamma_{\theta}^{-1} \bm{e} \bm{e}^T\Gamma_{\theta}^{-1}\bm{e}}{1+K\bm{e}^T\Gamma_{\theta}^{-1}\bm{e}}
=
\left(1-
\frac{K  \bm{e}^T\Gamma_{\theta}^{-1}\bm{e}}{1+K\bm{e}^T\Gamma_{\theta}^{-1}\bm{e}}\right)
\Gamma_{\theta}^{-1} \bm{e}
=\frac{\Gamma_{\theta}^{-1}\bm{e} }{1+K\bm{e}^T\Gamma_{\theta}^{-1}\bm{e}},$$
for each $K \in \R$. Then,
$$
\bm{e}^T(\Gamma_{\theta}+K \bm{e} \bm{e}^T)^{-1}\bm{e}=
\frac{\bm{e}^T\Gamma_{\theta}^{-1}\bm{e} }{1+K\bm{e}^T\Gamma_{\theta}^{-1}\bm{e}},
$$ thus,
$$
\frac{(\Gamma_{\theta}+K \bm{e} \bm{e}^T)^{-1}\bm{e}}{\bm{e}^T(\Gamma_{\theta}+K\bm{e} \bm{e}^T)^{-1}\bm{e}}=
\frac{\Gamma_{\theta}^{-1}\bm{e} }{ \bm{e}^T\Gamma_{\theta}^{-1}\bm{e}}.
$$
So $G(t)$ and $G(t)+K$, where $K\in \R-\{\frac{-1}{\bm{e}^T\Gamma_{\theta}^{-1}\bm{e}}\}$,
generate the same optimal execution of a TIM, then the decay kernel is identifiable up to a constant.
\end{os}

Moreover,
$\Gamma_{\theta}$ is identified up to a multiplicative constant, i.e., if $\Gamma_{\theta}$ satisfies \eqref{eq_probl_solver}
then any $\alpha \Gamma_{\theta}$ where $\alpha\neq0$ is a solution.
Therefore, we may set $G(0)$ such that the elements of the main diagonal of $\Gamma_{\theta}$, which are $G(0)+2\theta$, are equal to $1$.
 However,
 even if we select a particular class of decay kernel
 we show that in general the identification of $G^{(OE)}_{impl}(t)$
is not related only to a constant and multiplicative scaling.

To solve problem \eqref{eq_probl_solver}, let us set
$$
\Pi^{-1}=\frac{\Gamma_{\theta}^{-1}}{\bm{e}^T\Gamma_{\theta}^{-1} \bm{e} }
$$
hence $$
\Pi=(\bm{e}^T\Gamma_{\theta}^{-1} \bm{e} )\Gamma_{\theta}$$
and we may rewrite Equation \eqref{eq_probl_solver} as
\begin{equation}\label{eq_probl_solver2}
\Pi \bm{\xi}_{direc}^{*} =\bm{e}  
\end{equation}
where the unknowns are the $N+1$ different entries of 
$\Pi=\mbox{Toep}(g_0,g_1,\ldots,g_{N})=\mbox{Toep}(\bm{g})$.
Moreover $\Pi$ is symmetric and satisfies $\bm{e}^T \Pi^{-1} \bm{e}=1$.
 Thus, we recast Equation \eqref{eq_probl_solver2}
 into a linear system in the unknowns $\bm{g}$. 
 Let us consider as an example the case of $N+1=4$.
 The original system is formulated as 
 $$
 \begin{bmatrix}
 g_0 & g_1 & g_2 & g_3\\
  g_1 & g_0 & g_1 & g_2\\
   g_2 & g_1 & g_0 & g_1\\
    g_3 & g_2 & g_1 & g_0\\
 \end{bmatrix}
 \begin{bmatrix}
 \xi_1 \\
 \xi_2 \\
   \xi_3 \\
    \xi_4 \\
 \end{bmatrix}=\begin{bmatrix}
 1 \\ 1\\ 1\\1
 \end{bmatrix},
 $$
 where we omit the upper $*$ for the sake of simplicity,
 which can be rewritten as
  $$
 \begin{bmatrix}
 \xi_1 & \xi_2 & \xi_3 & \xi_4\\
  \xi_2 & \xi_1+\xi_3 & \xi_4 & 0\\
  \xi_3 & \xi_2+\xi_4 & \xi_1 & 0\\
 \xi_4 & \xi_3 & \xi_2 & \xi_1\\
 \end{bmatrix}
 \begin{bmatrix}
g_0 \\
 g_1 \\
   g_2 \\
    g_3 \\
 \end{bmatrix}=\begin{bmatrix}
 1 \\ 1\\ 1\\1
 \end{bmatrix},
 $$
 i.e., as 
 \begin{equation}\label{eq_with_H}
     H\bm{g}=\bm{e}
 \end{equation}
 from which we may recover $\bm{g}$. 
 However, since $\bm{\xi}^*_{direc}$ has a U-shape, in particular it is time-symmetric, Equation \eqref{eq_with_H} has in general 
 infinite solutions. This means that the implied transient 
 impact function can not be identified without imposing some restrictions.
 Before
 showing the main results we provide two examples.
 \begin{os}
 If for some reasons $\bm{\xi}^*_{direc}$ is not time-symmetric (for example it is the result of the Nash equilibrium of a
directional trader against $M > 1$ arbitrageurs, see Section \ref{sec_multi_agent}) and there are no other symmetries, the matrix $H$ is
full rank and invertible. The unique solution is
 the constant vector $\bm{g}=\frac{\bm{e}}{\bm{e}^T \bm{\xi}^*_{direc}}$, which provides however a singular matrix $\Pi$.
 Therefore, as we expect, when $\bm{\xi}^*_{direc}$ is not time-symmetric 
 there is no $\Gamma$ which satisfies Eq. \eqref{eq_probl_solver}.
 \end{os}
 
 \begin{es}
 Let us consider the case when $N+1=4$ and let us 
 suppose $\bm{\xi}^*_{direc}$ be a U-shaped.
 As a consequence the matrix $H$ is not full rank.
We recover $\bm{\xi}^*_{direc}$ by the optimal execution of a TIM where
 $\Gamma_{\theta}= \mbox{Toep}(1, 0.6, 0.5, 0.2)$, 
 thus $\bm{\xi}^*_{direc} = 12^{-1}(5, 1, 1, 5)$. The rank of the matrix $H$ is 2, so the space of the solutions is infinite and it has dimension $2$ and can be 
 parametrized as 
 $\bm{g}=\left(\frac{60}{29}-\frac{\alpha}{29}-\frac{30}{29}\beta,
 \frac{48}{29}-\frac{24}{29}\alpha+\frac{5}{29}\beta,\alpha,\beta\right)$.
 Choosing $\alpha=\beta=0$ we obtain $\Pi=\mbox{Toep}(60/29,48/29,0,0)$, which is clearly not proportional to the ``original" $\Gamma_{\theta}$, which can be recovered
 with 
 $\alpha=0.8450704$, $\beta = 0.3380282$, getting $\Pi=1.6901408\Gamma_{\theta}$.
 \end{es}

 \begin{te}[Identification Problem of the Implied Transient Impact Function]\label{id_intrinsic_decay}
 Let us suppose that $\bm{\xi}^*_{direc}\in \R^{N+1}$ has a U-shape, 
 then the system 
 \begin{equation}\label{eq_probl_solver2_te}
\Pi \bm{\xi}_{direc}^{*} =\bm{e} \Longleftrightarrow
H\bm{g}=\bm{e}
\end{equation}
where $\Pi=\mbox{Toep}(\bm{g})$ is symmetric
and $\bm{g}\in \R^{N+1}$ has strictly decreasing components, 
 has infinite solutions, where the rank of $H$ is equal to $(N+1)/2$ if $N+1$ is even and $N/2+1$ if $N+1$ is odd, respectively.
 \end{te}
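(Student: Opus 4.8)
The plan is to exploit the time-symmetry of $\bm{\xi}^*_{direc}$ to collapse the $(N+1)\times(N+1)$ system $H\bm g=\bm e$ onto a smaller full-rank system, and then read off $\operatorname{rank}H$ and the dimension of the solution set from that reduction. First I would record the entries of $H$ explicitly: writing $\bm\xi:=\bm{\xi}^*_{direc}$ and $\Pi=\mbox{Toep}(g_0,\dots,g_N)$, comparing $\Pi\bm\xi$ with $H\bm g$ gives $H_{i0}=\xi_i$ and, for $m\ge 1$,
\[
H_{im}=\xi_{i-m}\,\mathbf{1}_{\{i-m\ge 1\}}+\xi_{i+m}\,\mathbf{1}_{\{i+m\le N+1\}},
\]
which is exactly the pattern of the displayed $4\times4$ example. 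The crucial observation is that the U-shape makes $\bm\xi$ time-symmetric, $\xi_j=\xi_{N+2-j}$, and this forces rows $i$ and $N+2-i$ of $H$ to be \emph{equal}: replacing $\xi_{(N+2-i)\mp m}$ by $\xi_{i\pm m}$ and noting that the two index-range indicators are swapped under $j\mapsto N+2-j$, one gets $H_{N+2-i,m}=H_{i,m}$ for every $m$. Hence the column space of $H$ is contained in $V:=\{v\in\R^{N+1}:v_i=v_{N+2-i}\}$, whose dimension $r$ equals the number of orbits of $i\mapsto N+2-i$ on $\{1,\dots,N+1\}$, that is $(N+1)/2$ when $N+1$ is even and $N/2+1$ when $N+1$ is odd; in particular $\operatorname{rank}H\le r$, and it suffices to analyse the reduced matrix $\hat H$ consisting of the rows $1,\dots,r$.

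To obtain the matching lower bound I would exhibit an invertible $r\times r$ submatrix of $\hat H$, namely the one formed by the columns $m=N,N-1,\dots,N+1-r$. Using the formula above, its entry in position $(i,j)$ (row $i\in\{1,\dots,r\}$, column $m=N+1-j$) equals
\[
\xi_{i+j-N-1}\,\mathbf{1}_{\{i+j\ge N+2\}}+\xi_{i+N+1-j}\,\mathbf{1}_{\{i\le j\}} .
\]
When $i>j$ both indicators vanish (indeed $i+j\le 2r-1\le N+1<N+2$), so this submatrix is upper triangular; its diagonal entry at $(j,j)$ is $\xi_{N+1}$ plus possibly a further nonnegative term $\xi_{2j-N-1}$, hence is at least $\xi_{N+1}=\xi_1>0$ because the U-shape is positive. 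A triangular matrix with nonzero diagonal is invertible, so $\operatorname{rank}\hat H\ge r$, whence $\operatorname{rank}H=\operatorname{rank}\hat H=r$, the claimed value. Finally, the column space of $H$ is an $r$-dimensional subspace of the $r$-dimensional space $V$, so it \emph{equals} $V$, which contains $\bm e$; therefore $H\bm g=\bm e$ is consistent and its solution set is an affine subspace of dimension $(N+1)-r=\lfloor(N+1)/2\rfloor\ge 1$, i.e.\ there are infinitely many solutions. The additional requirement that $\bm g$ be strictly decreasing is an open condition, so it carves out a relatively open subset of this affine space, which for the U-shaped equilibria of interest is nonempty (cf.\ the preceding Example) and hence still infinite.

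The main obstacle is the lower bound $\operatorname{rank}\hat H\ge r$: once the row-collapsing symmetry is noticed, the upper bound on the rank and the consistency of $H\bm g=\bm e$ follow almost for free, but pinning the rank down exactly requires choosing the columns of the submatrix so that it becomes triangular, and the argument genuinely uses $\xi_1\neq 0$, i.e.\ the fact that the U-shaped equilibrium does not vanish at the endpoints. A secondary point requiring care is the even/odd bookkeeping of the index ranges: when $N+1$ is odd the middle index is a fixed point of the mirror map, so $V$ gains one extra dimension and the last diagonal entry of the triangular block picks up an additional $\xi_1$, but this does not affect invertibility or the final count.
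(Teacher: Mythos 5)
Your proof is correct, and it follows the same skeleton as the paper's: the time-symmetry $\xi_j=\xi_{N+2-j}$ together with the symmetry of $\Pi$ forces rows $i$ and $N+2-i$ of $H$ to coincide, which caps the rank at the number of mirror orbits, $r=(N+1)/2$ or $N/2+1$. Where you genuinely diverge — and improve on the paper — is the matching lower bound. The paper argues only that the first $r$ equations ``are different from each other'' because each contains mixed products absent from the others; distinctness of rows is of course weaker than linear independence, so that step is informal. You instead exhibit an explicit $r\times r$ submatrix (rows $1,\dots,r$, columns $m=N,\dots,N+1-r$) and verify from the formula $H_{im}=\xi_{i-m}\mathbf{1}_{\{i-m\ge 1\}}+\xi_{i+m}\mathbf{1}_{\{i+m\le N+1\}}$ that it is upper triangular with diagonal entries bounded below by $\xi_{N+1}=\xi_1>0$, which pins the rank exactly and uses only the positivity of the U-shape. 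You also supply a step the paper omits entirely: consistency of $H\bm g=\bm e$, via the observation that the column space, being an $r$-dimensional subspace of the $r$-dimensional space $V$ of mirror-symmetric vectors, must equal $V\ni\bm e$; without this the claim of ``infinitely many solutions'' (rather than possibly none) would not follow from the rank count alone. Your closing caveat about the strictly-decreasing constraint being an open condition whose nonemptiness must be checked separately (e.g.\ via the linear kernel of Theorem \ref{te_linear_beta_results}) is a fair flag of a point the paper also leaves implicit.
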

 In other words, in general, there are $(N+1)/2$ and $N/2$ solutions
 as much as implied transient impact functions, when $N+1$ is even and odd, respectively.

\subsection{The Implied Linear Transient Impact Function}
\label{sec_linear_TIM}
We now investigate the previous problem when the implied 
decay kernel is restricted to be linear.
We assume that the time grid is equidistant 
$\T_N=\{\frac{kT}{N}|k=0,1,\ldots,N\}$, where withouth loss of generality $T=1$ and $N\in \N$.
If $G^{(OE)}_{impl}(t)=\alpha+\beta t$, where $\beta<0$ and, since the
kernel is identified up to a constant term, we impose 
without loss of generality that $\alpha$
 is such that $G^{(OE)}_{impl}(T)=0$. The previous matrix $\Pi$
 is proportional up to a constant to the decay kernel matrix $\Gamma_{\theta}$, so we search for a 
 $\Pi=\mbox{Toep}(g_0,g_1,\ldots,g_N)$, where 
 $g_k=2\theta\delta_{k,0}+\alpha+\beta\frac{k}{N}$, for $k=0,1,\ldots,N$ where $\delta_{k,0}$ is $1$ for $k=0$ and 0 otherwise.
 Then, if $X_{direc}$ is the inventory of the directional agent, since $\Pi \bm{\xi}_{direc}^*=1$ we may recover\footnote{For the sake of simplicity we remove the upper $*$ and $direc$ from the component of $\xi^*_{direc}$.} the first $\lfloor{N/2+1}\rfloor$
 conditions:
 \begin{small}
 \begin{enumerate}[align=left]
     \item[(Eq. 1)] $\alpha X_{direc}+2\theta\xi_1+\frac{\beta}{N} \sum_{i=2}^{N+1}(i-1) \xi_i=1 $;
     \item[(Eq. 2)]  $\alpha X_{direc}+2\theta\xi_2+\frac{\beta}{N}  \left(\xi_1+\sum_{i=3}^{N+1}(i-2) \xi_i\right)=1 $; 
     \item[(Eq. $k$)]  $\alpha X_{direc}+2\theta\xi_3+\frac{\beta}{N} \left((k-1)\xi_1+(k-2)\xi_2+\cdots+2\xi_{k-2}+\xi_{k-1}+
     \sum_{i=k+1}^{N+1}(i-k) \xi_i\right)=1 $, 
 \end{enumerate}
 \end{small}
 where $k\leq\lfloor{N/2+1}\rfloor$. 
  From Corollary \ref{cor_Ushape} if $\theta>\theta^*=G_1/4$, the components of $\bm{\xi}$ 
 are all positive.
 Thus, subtracting each equation from the previous one we obtain that
 \begin{enumerate}[align=left]
     \item[(Eq. 1) $-$ (Eq. 2)]: $2\theta (\xi_1-\xi_2)
     +\frac{\beta}{N} \left( 
     \sum_{i=2}^{N+1} \xi_i-\xi_1
     \right)=0
     $;
     \item[(Eq. $k-1$) $-$ (Eq. $k$)]:   $2\theta (\xi_{k-1}-\xi_k)
     +\frac{\beta}{N} \left( 
     \sum_{i=k}^{N+1} \xi_i-     \sum_{i=1}^{k-1} \xi_i
     \right)=0$, where $k\leq\lfloor{N/2+1}\rfloor$.  
 \end{enumerate}
 Therefore,
 since $X_{direc}=\sum_{i=1}^{N+1}\xi_i$,
 we may compute $\beta$ using the previous equation (Eq. 1) $-$ (Eq. 2),
 \[
 \beta=\frac{-2\theta N \cdot (\xi_1-\xi_2)}{X_{direc}-2\xi_1},
 \]
 but from (Eq. $k-1$) $-$ (Eq. $k$)
 \[
 \beta=\frac{-2\theta N \cdot (\xi_{k-1}-\xi_k)}{X_{direc}-2\sum_{i=1}^{k-1}\xi_i},
 \quad k\leq\lfloor{N/2+1}\rfloor.
 \]
 Therefore, it must hold for  $k\leq\lfloor{N/2+1}\rfloor$
 \begin{equation}\label{cond_linear_decay_first}
     \frac{-2\theta N \cdot (\xi_1-\xi_2)}{X_{direc}-2\xi_1}=
     \frac{-2\theta N \cdot (\xi_2-\xi_3)}{X_{direc}-2\xi_1-2\xi_2}=
     \cdots=
     \frac{-2\theta N \cdot (\xi_{k-1}-\xi_k)}{X_{direc}-2\sum_{i=1}^{k-1}\xi_i},
 \end{equation}
 which, since $\theta \neq 0$,  is equivalent to
  \begin{equation}\label{cond_linear_decay}
     \frac{ (\xi_1-\xi_2)}{X_{direc}-2\xi_1}=
     \frac{(\xi_2-\xi_3)}{X_{direc}-2\xi_1-2\xi_2}=
     \cdots=
     \frac{ (\xi_{k-1}-\xi_k)}{X_{direc}-2\sum_{i=1}^{k-1}\xi_i},\
     k\leq\lfloor{N/2+1}\rfloor
     .
 \end{equation}

 Then, we have proven
 that if the components of $\bm{\xi}^*_{direc}$ satisfy the relations  \eqref{cond_linear_decay}
 then $\Pi=\mbox{Toep}(\bm{g})$, where 
 $g_k=2\theta\delta_{k,0}+\alpha+\beta\frac{k}{N}$, $k=0,1,\ldots,N$ is a solution for \eqref{eq_probl_solver2_te}.
 Actually, the above conditions on $\bm{\xi}^*_{direc}$ are a necessary and sufficient conditions for the existence of linear implied transient impact function in the above model setting.
 We now may prove the following result.

 
 
\begin{te}[Linear Implied Transient Impact Function]\label{te_linear_beta_results}
 Let  $\T_N=\{\frac{kT}{N}|N\in \N, \ k=0,1,\ldots,N\}$, 
 be an equidistant time grid and $\theta>\theta^*=G_1/4$. There always exists a linear implied transient impact described by $\Pi=\mbox{Toep}(\bm{g})$, where 
 $g_k=2\theta\delta_{k,0}+\alpha+\beta\frac{k}{N}$, $k=0,1,\ldots,N$ and $\beta= \frac{-2\theta N \cdot (\xi_{1}-\xi_2)}{X_{direc}-2\xi_1}$.
 \end{te}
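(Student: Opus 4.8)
The plan is to reduce the claim to the compatibility conditions \eqref{cond_linear_decay} and then verify those directly from the closed form of the Nash equilibrium. The discussion preceding the theorem already shows that a kernel of the prescribed shape $g_k=2\theta\delta_{k,0}+\alpha+\beta\tfrac{k}{N}$ solves $\Pi\bm{\xi}^*_{direc}=\bm{e}$ (equivalently $H\bm{g}=\bm{e}$) precisely when all the ratios in \eqref{cond_linear_decay} coincide; in that case $\beta$ is forced to equal $\tfrac{-2\theta N(\xi_1-\xi_2)}{X_{direc}-2\xi_1}$, which is nothing but relation (Eq.\,1) $-$ (Eq.\,2), the scalar $\alpha$ is then the unique value making (Eq.\,1) hold, and the additive freedom of Remark \ref{up_constant} lets us renormalise so that $G^{(OE)}_{impl}(T)=\alpha+\beta=0$ without spoiling the constant‑vector property of $\Pi\bm{\xi}^*_{direc}$ (because $\bm{e}\bm{e}^T\bm{\xi}^*_{direc}=X_{direc}\bm{e}$). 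Hence the whole proof reduces to verifying \eqref{cond_linear_decay} for $\bm{\xi}^*_{direc}$.

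For this I would substitute the explicit expression from Theorem \ref{NE_symm} and Proposition \ref{direc_solut_symm}. Writing $\xi_k$ for the components of $\bm{\xi}^*_{direc}$, $\lambda=2\theta/G_1+\tfrac12$ and $a=1-\tfrac1\lambda$, one obtains $\xi_k=\kappa\,(a^{k-1}+a^{N+1-k})$ for $k=1,\dots,N+1$ with $\kappa=\tfrac{X_{direc}}{2\lambda(1-a^{N+1})}$; the hypothesis $\theta>\theta^*=G_1/4$ is used exactly here, since it forces $\lambda>1$ and hence $a\in(0,1)$. A geometric‑sum computation then gives, for $2\le j\le\lfloor N/2+1\rfloor$,
\[
\xi_{j-1}-\xi_j=\kappa(1-a)\bigl(a^{j-2}-a^{N+1-j}\bigr),\qquad
X_{direc}-2\sum_{i=1}^{j-1}\xi_i=\frac{2\kappa a}{1-a}\bigl(a^{j-2}-a^{N+1-j}\bigr),
\]
the second identity coming from $\sum_{i=1}^{j-1}\xi_i=\tfrac{\kappa}{1-a}(1-a^{j-1})(1+a^{N+2-j})$ and $X_{direc}=\tfrac{2\kappa}{1-a}(1-a^{N+1})$. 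On that range $j-2<N+1-j$, so with $0<a<1$ the common factor $a^{j-2}-a^{N+1-j}$ is strictly positive; dividing gives
\[
\frac{\xi_{j-1}-\xi_j}{\,X_{direc}-2\sum_{i=1}^{j-1}\xi_i\,}=\frac{(1-a)^2}{2a},
\]
independent of $j$ — which is exactly \eqref{cond_linear_decay}. Consequently $\beta=-2\theta N\cdot\tfrac{(1-a)^2}{2a}=-\theta N(1-a)^2/a<0$, matching the stated formula, and $g_{k+1}-g_k=\beta/N<0$ for $k\ge1$ while $g_1-g_0=\beta/N-2\theta<0$, so $\bm{g}$ has strictly decreasing components and $\Pi=\mathrm{Toep}(\bm{g})$ is admissible.

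It is worth recording the structural reason this works: writing $\Pi=2\theta I+\alpha\bm{e}\bm{e}^T+\tfrac{\beta}{N}L$ with $L=\mathrm{Toep}(0,1,\dots,N)$, i.e.\ $L_{ij}=|i-j|$, the elementary identity $(L\bm{x})_{j+1}-2(L\bm{x})_j+(L\bm{x})_{j-1}=2x_j$ turns the requirement $\Pi\bm{\xi}^*_{direc}=\text{const}\cdot\bm{e}$ into the interior recursion $\theta(\xi_{j+1}-2\xi_j+\xi_{j-1})+\tfrac{\beta}{N}\xi_j=0$, which the geometric profile $a^{k-1}+a^{N+1-k}$ satisfies with $\tfrac{\beta}{N}=-\theta\tfrac{(1-a)^2}{a}$, while the remaining linear‑term obstruction vanishes automatically from the time‑symmetry $\xi_j=\xi_{N+2-j}$ (an affine, time‑symmetric vector is constant). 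The main labour is the geometric‑sum bookkeeping that puts $X_{direc}-2\sum_{i<j}\xi_i$ into the factored form above; the only genuine subtlety I anticipate is then checking that the cancelled factor $a^{j-2}-a^{N+1-j}$ is nonzero over the full range $2\le j\le\lfloor N/2+1\rfloor$ of independent equations for both parities of $N$, together with the harmless degenerate case $N=1$, where $\bm{\xi}^*_{direc}$ is the flat vector, the single equation underdetermines $\Pi$, and any $\beta<0$ works.
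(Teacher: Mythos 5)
Your proposal is correct and follows essentially the same route as the paper's proof: reduce the claim to verifying the compatibility conditions \eqref{cond_linear_decay}, substitute the closed form $\xi_k=\kappa\,(a^{k-1}+a^{N+1-k})$ from Proposition \ref{direc_solut_symm}, and show via geometric sums that each ratio equals $\tfrac{(1-a)^2}{2a}$ independently of $k$, which yields the stated $\beta=-\theta N(1-a)^2/a<0$. The extra structural remark about the second-difference identity for $L_{ij}=|i-j|$ and the explicit handling of the positivity of the cancelled factor $a^{j-2}-a^{N+1-j}$ are correct supplements but do not change the argument.
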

 
 The theorem tells that linear solutions of $\eqref{eq_probl_solver}$ are contained in a one dimensional affine space, so 
 if we impose that $G^{(OE)}_{impl}(T)=0$, then we may identify
 the unique implied linear transient impact function.
 The slope is provided by 
 $-\theta N \frac{(1-a)^2}{a}=
 -\frac{4\theta N G_1^2}{16\theta^2-G_1^2}
 $, since $a=1-1/\lambda$,
 $\lambda=2\theta/G_1+\frac{1}{2}$. 
 Since $\theta > G_1/4$ then the slope is always negative, i.e., the linear implied market impact 
 function is effectively a decay kernel.
 We also observe that this slope is in absolute value an increasing function of $G_1$ and it decreases with $\theta$. We further investigate this latter relation 
 in Section \ref{transaction_cost_effect}

\begin{os}\label{os_two_theta}
 We observe that if $\theta=0$, then the relations \eqref{cond_linear_decay_first}
 are satisfied. 
 However, the Nash Equilibrium of the market impact game model exists and it has no oscillations when $\theta\geq\theta^*>0$.
 Therefore, we may set two different $\theta$ one for the market impact game, $\theta_1$, such that the Nash equilibrium is well defined and one for the optimal execution model $\theta_2$ where we set $\theta_2=0$. We discuss this particular case in Section \ref{sec_transaction_cost_two_theta}.
 \end{os}

\subsection{Role of Transaction Costs }\label{transaction_cost_effect}
 
In general, the implied transient impact depends on the level of transaction costs. This is due to the fact that in the market impact game, the Nash equilibrium depends on the parameter $\theta$.
To give a concrete example, let us consider the same setting as in 
Section \ref{market_impact_game} and let us focus on the implied impact via optimal execution, restricting our attention to linear functions.
In Section \ref{sec_linear_TIM} we proved that the slope of the implied impact is  $ -\frac{4\theta N G_1^2}{16\theta^2-G_1^2}$. So, the absolute value of the slope is a decreasing function of $\theta$ and it goes to zero for large values of $\theta$.
This phenomenon can also be explained by looking at the interaction between the two agents in the market impact game.
Indeed, when transactions costs increase, the interaction between agents in market impact game disappears, since the arbitrageur
gradually reduces the traded volume, so that the optimal
schedule of the directional becomes the same as in the standard Almgren-Chriss framework, see Figure \ref{fig_sol_AC_theta}.
Therefore, when $\theta$ increases, there is less and less interaction between the agents and the implied transient impact function 
vanishes.

     \begin{figure}[!t]
\centering
{\includegraphics[width=1.05\textwidth]{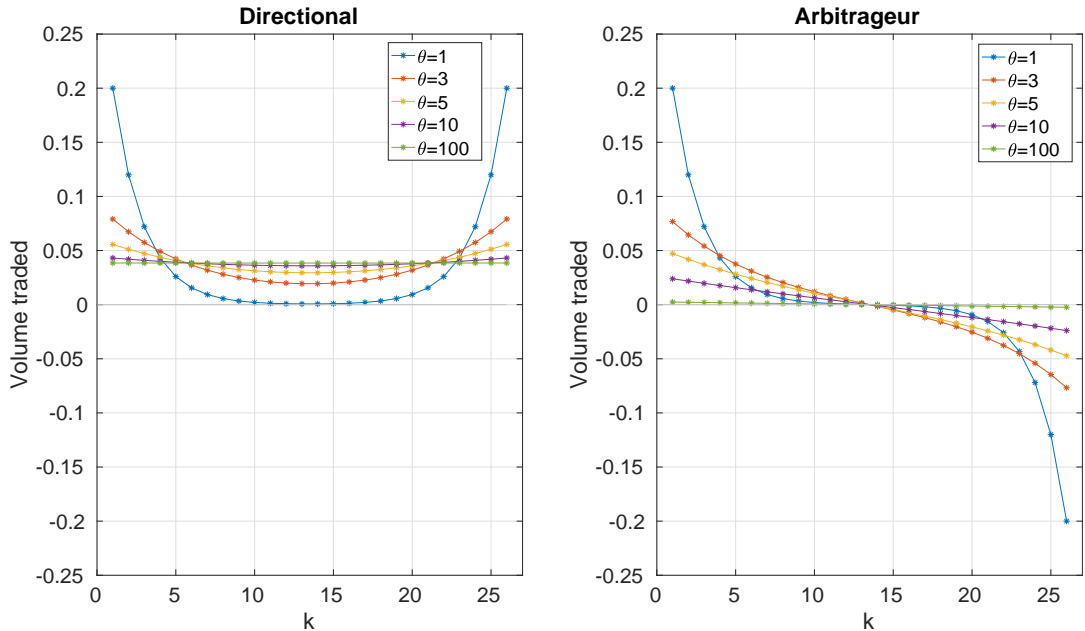}}
\caption{Nash equilibria of market impact games with fixed $G_1=1$, and equidistant time grid where $T=1, N=25$, when the transaction costs level $\theta$ increases.} 
\label{fig_sol_AC_theta}
\end{figure}

\subsection{The Multi-Agent Case}\label{sec_multi_agent}

Even if, in the previous setting, we may interpret  the two traders as representative agents, we now analyze the general setting with more than two agents. In a multi-asset market impact game with $J>2$ agents, we use the results of \cite{luo_schied} and \cite{cordoni2020instabilities} to derive the Nash equilibria of agents. However, the results of Section \ref{sec_ne_primary_model} are no more valid 
when we consider $J>2$ traders,
 since the fundamental solutions depend on the number of agents, which implies that the decay kernel matrix is no more Toeplitz in general, see e.g., \cite{luo_schied}. 
As an example, we consider $2$ arbitrageurs and a directional trader\footnote{All the agents are assumed to be risk-neutral.}, which trade the same asset.
Figure \ref{fig_3_agents_NE}
exhibits the related Nash equilibria for the agents when 
we set $T=1$, $N=25$, $G_1=1$, $\theta=1$. 
We observe that the solutions of the two arbitrageurs are identical.
Even if for the arbitrageur the optimal solution is 
always a round-trip strategy facing the same direction of the directional trader at the beginning of the session,
it is quite evident that the optimal solution for the directional is a U-shape which is no more time-symmetric. In particular, it is optimal for him/her 
to trade more at the end of session, exploiting the arbitrageur's impact.  

\begin{figure}
    \centering
    \includegraphics[scale=0.4]{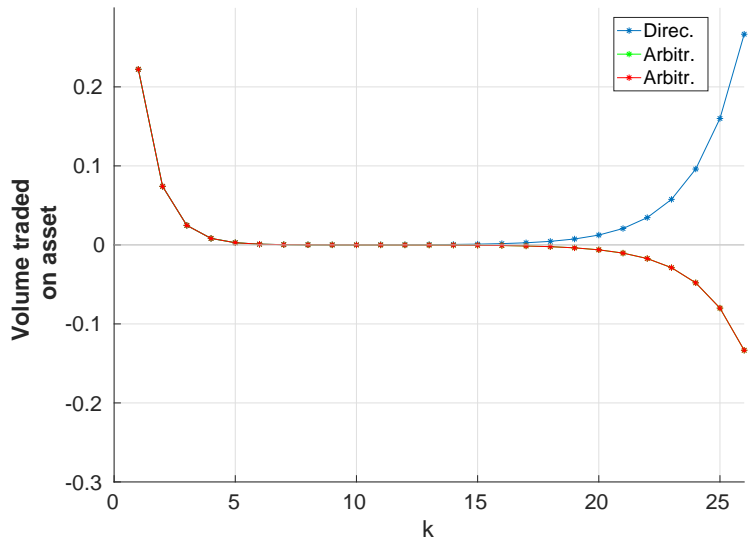}
    \caption{Nash equilibria for the market impact game model with a directional trader and two arbitrageur, where 
    $T=1$, $N=25$, $G_1=1$, $\theta=1$. The solution of the two arbitrageurs are identical.}
    \label{fig_3_agents_NE}
\end{figure}

\subsection{Is it possible to incorporate transaction costs in the decay kernel?}
\label{sec_transaction_cost_two_theta}

In this section we examine the relations between the transaction cost level $\theta$s and the implied transient impact function.
%
 Specifically, we generalize the above setting by considering two different $\theta$s for the two models and
 by asking whether it is possible to select a kernel function for the TIM optimal execution problem in such a way the kernel incorporates the transaction cost level without estimating it.
We focus our attention on the case when 
$\theta_1\geq G_1/4$, while $\theta_2=0$ for the optimal execution model. This last condition implies that
$\bm{\eta}_{direc}^*$ is characterized by the vector $\Gamma^{-1} \bm{e}$ 
and the implied transient impact function $G^{(OE)}_{impl}(t)$ is such that it incorporates the interaction between the two agents together with the 
 transaction costs of the market impact games, without relying on the transaction cost level.

In Section \ref{sec_linear_TIM} we have shown that there exists a linear implied transient impact function, which is unique up to constant, where the slope is uniquely identified. 
However, when $\theta_1 = G_1/4$ but $\theta_2=0$ any linear decay kernel (regardless of the slope coefficient) can be selected as an implied market impact function.
When $\theta_1\neq G_1/4$ and $\theta_2=0$ there are no linear implied transient impact solution.
To show that, we consider a different 
perspective when solving the optimal execution inverse problem.
We start by choosing a specific kernel $G_2(t)$ in the optimal execution model and we search for a suitable parameter setting of the market impact game model so that the selected decay kernel are the corresponding implied impact function, i.e., $G_2(t)\equiv G^{(OE)}_{impl}(t)$.
 In other words, given $\bm{\eta}^*_{direc}$ we search a suitable parameter setting for the market impact game such that $\bm{\xi}^*_{direc}=\bm{\eta}^*_{direc}$.
$G_2(t)$ is specified to be linear.


If $G_{2}(t)= \alpha+ \beta t$, where $\beta\neq0$, then for an equidistant time grid, $t_i=i/N, \ 
i=0,1,\ldots,N$,
$\Gamma_{i,j}=\frac{\beta|i-j|}{N}+\alpha$. The inverse of $\Gamma$
is given by, see
\cite{dow2002explicit},
\[
\Gamma^{-1}=
\frac{1}{2\beta}
\begin{bmatrix}
-\eta_{N}/\eta_{N+1} & 1 & 0 &\cdots & 0 & \beta^2 /\eta_{N+1} \\
1& -2 & 1 &0 &\cdots &0 \\
0& 1 & -2 &1 & 0 & 0 \\
& & \ddots & \ddots & \ddots & \\
0&\cdots & 0 &1 &-2 &1\\
\beta^2 /\eta_{N+1} & 0 &\cdots &0 &1 &-\eta_{N}/\eta_{N+1}
\end{bmatrix}
\]
where $\eta_N=2\alpha\beta+\beta^2(N-1)$. Then, 
$\bm{\eta}_{direc}^*=\frac{X_{direc}}{\bm{e}^T\Gamma^{-1} \bm{e} }
\Gamma^{-1} \bm{e}=[X_{direc}/2,0,\cdots,0 , X_{direc}/2]^T$ 
regardless of any choice of $\beta$ and $\alpha$.

 Therefore, when $\theta_2=0$ in the optimal execution model, the optimal schedules are  given by a vector that concentrates the orders at the 
 two extremes of the trading session and it has a (zero) constant trading rate for all intermediate trading times\footnote{The same trading profile schedule is also exhibited in the exponential decay kernel case, where the trading velocity at intermediary trading times is constant but different 
 from zero.
 }. So, the question is whether the solution of 
 a market impact game has a shape that looks like the one described above for the single-agent TIM.   


%


\begin{es}
  \label{es_theta0}
 We observe that when\footnote{We remark that when  $\theta_1=\theta_1^*=G(0)/4$, and other generic assumptions, 
 the continuous time
Nash equilibrium of a general market impact game exists and it coincides with the high-frequency limits ($N\to \infty$) of the discrete-time
equilibrium,  as showed in \cite{schied2017high}.}
  $\theta_1=G_1/4$,  
 $\Gamma_{\theta_1}=2\theta_1 I+ G_1 \bm{e}\bm{e}^T=
 \frac{G_1}{2} I+G_1\bm{e}\bm{e}^T$ and so 

 \[ 
 (\Gamma_{\theta_1}+\widetilde{\Gamma})=
 G_1\begin{bmatrix}
 2 & 1 & 1 & \cdots &1\\
 2 & 2 & 1 &\cdots &1\\
 \vdots & & \ddots & &\vdots\\
  2 & & \cdots & 2 &1  \\
 2 & &\cdots &\cdots & 2  \\
 \end{bmatrix}, 
  (\Gamma_{\theta_1}+\widetilde{\Gamma})^{-1}=
G_1^{-1}\begin{bmatrix}
 1 &   &  & & -1/2 \\
 -1 & 1 &  \\
   & & \ddots & \ddots& \\
   & &  &-1 & 1  \\
 \end{bmatrix}
 \]
 
  \[ 
 (\Gamma_{\theta_1}-\widetilde{\Gamma})=
 G_1\begin{bmatrix}
 1 & 1 &  & \cdots &1\\
   & 1 &  &\cdots &1\\
  & &  &\ddots &\vdots\\
   & & & & 1  \\
 \end{bmatrix}, 
  (\Gamma_{\theta_1}-\widetilde{\Gamma})^{-1}=
G_1^{-1}\begin{bmatrix}
 1 &  -1 &   \\
  &  \ddots  &\ddots \\
   & &  1 &-1 \\
   & &   & 1  \\
 \end{bmatrix}.
 \]
 Therefore, $\bm{v}=[1, 0, \cdots, 0]^T$ and $\bm{w}=[0, \cdots ,0 ,1]^T$
 and the solution for the directional $\bm{\xi}_{direc}^*$ in the market impact game model
 is exactly equal to the one obtained by the optimal execution model with no transaction cost and linear decay kernel, i.e., 
 $$\bm{\xi}_{direc}^*=[X_{direc}/2, 0,\cdots,0 , X_{direc}/2]^T.$$
 \end{es}
 
Thus, when $\theta_1=\theta_1^*$, any linear decay kernel $G_2 (t)$ can be selected as 
the implied transient impact function $G_{impl}^{(OE)}(t)$ in the market impact game model.
Furthermore, it holds the following results.

\begin{pr}\label{prop_xi_spikes}
In the market impact game described in Section \ref{sec_ne_primary_model},
$$\bm{\xi}_{direc}^*=[X_{direc}/2,0,\cdots,0 , X_{direc}/2]^T,$$ if and only if $\theta_1=G_1/4$.
\end{pr}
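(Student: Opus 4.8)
The plan is to substitute the closed form of the directional agent's equilibrium from Proposition~\ref{direc_solut_symm} and Theorem~\ref{NE_symm} and read off the equivalence from a single coordinate. Recall $\bm{\xi}_{direc}^*=\tfrac12 X_{direc}(\bm{v}+\bm{w})$ with $w_k=v_{N+2-k}$ and $v_k=\dfrac{a^{k-1}}{\lambda(1-a^{N+1})}$ for every $k=1,\dots,N+1$ (the stated $v_1=\tfrac1{\lambda(1-a^{N+1})}$ is just the $k=1$ instance), where $\lambda=2\theta_1/G_1+\tfrac12$ and $a=1-1/\lambda$, so that $\theta_1=G_1/4\iff\lambda=1\iff a=0$. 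Thus I would first record
\[
\xi_{direc,k}^*=\frac{X_{direc}}{2\lambda(1-a^{N+1})}\bigl(a^{k-1}+a^{N+1-k}\bigr),\qquad k=1,\dots,N+1,
\]
noting that existence of the Nash equilibrium already forces $\theta_1>0$, hence $\lambda>\tfrac12$, $a\in(-1,1)$, so $1-a^{N+1}\neq0$ and the displayed prefactor is a well-defined nonzero constant. Throughout I take $N\ge2$, so that the vector in the statement has a non-empty block of zeros (for $N=1$ the equilibrium is $[X_{direc}/2,X_{direc}/2]^T$ for every $\theta_1$).

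For the ``if'' direction I would set $\theta_1=G_1/4$, i.e. $a=0$ and $\lambda=1$; then $v_1=1$ and $v_k=0$ for $k\ge2$, so $\bm{v}=(1,0,\dots,0)^T$, $\bm{w}=(0,\dots,0,1)^T$, and $\bm{\xi}_{direc}^*=\tfrac12 X_{direc}(\bm{v}+\bm{w})=[X_{direc}/2,0,\dots,0,X_{direc}/2]^T$, which is exactly the claim (this also re-derives Example~\ref{es_theta0}).

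For the converse I would assume $\bm{\xi}_{direc}^*=[X_{direc}/2,0,\dots,0,X_{direc}/2]^T$ and use just its second coordinate: $\xi_{direc,2}^*=0$. Since $X_{direc}\neq0$ and the prefactor above is nonzero, this forces $a+a^{N-1}=0$, i.e. $2a=0$ if $N=2$ and $a(1+a^{N-2})=0$ if $N\ge3$; in the latter case $1+a^{N-2}\ge 1-|a|^{N-2}>0$ because $|a|<1$. Either way $a=0$, hence $\lambda=1$ and $\theta_1=G_1/4$, closing the equivalence.

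I expect the only genuinely delicate point to be the reliance on $a\in(-1,1)$: both the formula of Proposition~\ref{direc_solut_symm} and the sign estimate $1+a^{N-2}>0$ use it, and it is precisely here that the standing assumption $\theta_1>0$ (equivalently $a>-1$) matters — if $a=-1$ were allowed, the denominator $1-a^{N+1}$ could vanish and, for $N$ odd, every middle coordinate $a^{k-1}+a^{N+1-k}$ would vanish spuriously. Apart from this, the argument is a direct substitution into the explicit solution, so I do not foresee further obstacles.
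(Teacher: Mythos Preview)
Your proof is correct and follows the same overall strategy as the paper: substitute the closed form $\xi_{direc,k}^*=\dfrac{X_{direc}}{2\lambda(1-a^{N+1})}\bigl(a^{k-1}+a^{N+1-k}\bigr)$ and force $a=0$ from a single coordinate. The only genuine difference is \emph{which} coordinate is used. The paper reads off the first one: $\xi_{direc,1}^*=X_{direc}/2$ gives $1+a^N=\lambda(1-a^{N+1})=\tfrac{1-a^{N+1}}{1-a}=1+a+\cdots+a^N$, hence $a+a^2+\cdots+a^{N-1}=0$, and then concludes $a=0$ by invoking the standing hypothesis $\theta_1\ge G_1/4$ of Section~\ref{sec_transaction_cost_two_theta}, which makes $a\in[0,1)$ so that every summand is nonnegative. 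Your choice of the second coordinate yields the shorter factorisation $a(1+a^{N-2})=0$ and requires only $|a|<1$, so it in fact covers the whole range $\theta_1>0$ rather than just $\theta_1\ge G_1/4$; your explicit handling of the degenerate case $N=1$ is likewise a small refinement absent from the paper.
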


Therefore, when $\theta_2=0$ the implied transient impact function is linear if and only if $\theta_1=G_1/4$, and all linear decay kernel functions are  
solution of the implied decay kernel problem, i.e., $\bm{\xi}_{direc}^*=\bm{\eta}_{direc}^*$ for all $G_{impl}^{(OE)}(t)$ linear.
We remark that,
contrary to Section \ref{sec_linear_TIM}, in this setting we may select any slope coefficient for $G_{impl}^{(OE)}(t)$.





\section{Conclusion}
\label{conclusion}

Understanding the transient nature of market impact is essential as it describes how prices react to trades and it is related to the information content of a trade. Rather than postulating its existence, this paper contributes to the recent literature by providing an explanation for its origin. 
We showed that transient impact naturally emerges from the Nash equilibrium of a market impact game with permanent and fixed impact. Using the setting of market impact games, our paper indicates that, in general, the impact function describing in the model the effect of trade volume on price is different from the impact function that can be inferred by an external observer who measures it from the trading activity of a specific agent (the directional in our setting). We term this inferred impact as ``implied", and we show that the implied impact of a permanent market impact game is transient.     


More specifically, we propose two approaches to derive implied impact. The first considers the lagged correlation between the directional trading volume and price changes, while the second one considers the execution of the directional trader as optimal with respect to a transient impact model and derives the possible kernel (or propagator) function. Although the implied impact is transient in both cases, there are substantial differences. In the first case, the solution is unique, while in the second one an infinite number of possible solutions. In particular, under mild assumptions on the parameters of the market impact game, a linear solution can be derived and characterized
    in terms of the Nash equilibrium. 
We also analyze the sensitivity of the implied transient impact function to transaction costs level.
    Since the implied decay kernel results from the interaction between a directional and an arbitrageur trader, when we increase the transaction costs parameter, the volume of the arbitrageur reduces until the implied transient impact no longer exists.
    Finally, we show that when we extend our framework allowing to incorporate the transaction costs in the implied price impact function, any linear family can be selected as implied market impact.
    
    In conclusion, our paper shows a possible origin of the transient nature of market impact and highlights an essential difference between the real impact ruling the game and the one that can be measured with statistical methods from trade data.

    As a possible extension, one may ask whether the evidence of transient impact under the notion of implied market impact function might be extended in a continuous-time setting. However, the continuous-time extension of the Schied and Zhang market impact game model is well defined only for $\theta = G(0)/4$, where  \cite{schied2017high} have shown that the continuous Nash equilibrium exists and it coincides with the high-frequency limit of the discrete model.
    Therefore, only in this special case the same evidence in the continuous-time model might be found. 
    The constraining assumption on $\theta$ makes the continuous-time model of marginal interest from an economic perspective. 
    On the other hand, 
    the discrete-time market impact game turns out to be more flexible and relevant. Moreover, as remarked in  
    \cite{strehle2017single}, ``continuous trading is an idealization" and every continuous-time trading strategy has to be discretized via block trades in order to be executed.
Another way to employ continuous-time modelling, which is not affected by constraining conditions on $\theta$ of continuous-time market impact game, is to
    follow \cite{strehle2017optimal}, where a different approach to modelling transaction costs is presented, so that we may avoid the singularity presented by the parameter $\theta$.
    One could solve the optimal execution problem with Fredholm integral equation, but only in the exponential case a closed-form solution may be derived, and in the general case, the equation must to solved numerically. 
 An interesting question is to analyze whether the discretization of this different model could be led back to one of the approaches we have analyzed, where we might expect to solve a discretization of a Fredholm integral equation.
 However, this further aspect is beyond the scope of this work and we leave it for further research and study.
   

\section*{Declarations}

\noindent\textbf{Ethical Approval}:
Not applicable.

\noindent\textbf{Competing interests}:
Not applicable.
 
\noindent\textbf{Authors' contributions}:
Not applicable.

\noindent\textbf{Funding}:
Not applicable.

\noindent\textbf{Availability of data and materials}:
Not applicable.

    \bibliography{bib2}
\begin{appendices}
\titlelabel{\appendixname\ \thetitle.\quad}
\appendixtitleon
\section{When  $\Gamma_{\theta}+\widetilde{\Gamma}$
is singular?}
\label{app_1}
Let $G(t)\equiv G>0$ and $\theta>0$. Then,
$\Gamma_{\theta}+\widetilde{\Gamma}=2\theta I+
\widetilde{\Gamma}+\Gamma=
A+G \bm{e} \bm{e}^T$
where 
\[
A=G
\begin{bmatrix}
\lambda \\
1 & \lambda \\
\vdots & \ddots & \ddots \\
1 & \cdots & 1 & \lambda
\end{bmatrix}, 
\quad \lambda=2\frac{\theta}{G} +\frac{1}{2},
\]
is non singular since $\lambda>0$. Moreover, it is a straightforward computation to verify that
\[
A^{-1}=\frac{1}{G}
\begin{bmatrix}
\frac{1}{\lambda} \\
-\frac{1}{\lambda^2} & \frac{1}{\lambda} \\
-\frac{(\lambda-1)}{\lambda^3} & -\frac{1}{\lambda^2} & \frac{1}{\lambda} \\
\vdots & \vdots & \ddots &\ddots \\
-\frac{(\lambda-1)^{N-2}}{\lambda^{N }} &
 -\frac{(\lambda-1)^{N-3}}{\lambda^{N-1}}
&\cdots &-\frac{1}{\lambda^2}  & \frac{1}{\lambda}
\\
-\frac{(\lambda-1)^{N-1}}{\lambda^{N+1}} &
 -\frac{(\lambda-1)^{N-2}}{\lambda^N}
&\cdots &\cdots &-\frac{1}{\lambda^2}  & \frac{1}{\lambda}
\end{bmatrix}. 
\]
Then, since $A$ is non singular, for the matrix determinant lemma 
$A+G \bm{e} \bm{e}^T$ is non singular  if and only if 
$1+G\cdot \bm{e}^TA^{-1} \bm{e}\neq0$. Let
$\bm{x}=GA^{-1} \bm{e}$, where 
$\bm{x}_1=1/\lambda$ and $\bm{x}_n=
\frac{1}{\lambda}-\frac{1}{\lambda^2}\sum_{k=0}^{n-2}
\left(\frac{\lambda-1}{\lambda}\right)^{k}
$, $n=2,\ldots,N+1$. We first observe that when $\lambda=1$,
$A+G\bm{e}\bm{e}^T$ is non singular, since
$1+G\bm{e}^T A^{-1} \bm{e}=1+G\cdot \bm{e}^TA^{-1} \bm{e}=2$. 

Thus we assume that $\lambda\neq1$.
Then if $1+G\cdot \bm{e}^TA^{-1} \bm{e}>\frac{1}{\lambda}>0$ the matrix 
$A+G\bm{e}\bm{e}^T$ is non singular.
In particular if $\theta>0$ then $1+G\cdot \bm{e}^TA^{-1} \bm{e}>\frac{1}{\lambda}$. Indeed, $1+G\cdot \bm{e}^TA^{-1} \bm{e}>\frac{1}{\lambda}$
if and only if
\[
1+\frac{1}{\lambda}+
\sum_{n=2}^{N+1} \left( 
\frac{1}{\lambda}-\frac{1}{\lambda^2}
\sum_{k=0}^{n-2} \left(
\frac{\lambda-1}{\lambda}
\right)^k
\right)>\frac{1}{\lambda}\]
\[\iff
\frac{N+1}{\lambda}-\frac{1}{\lambda}-\frac{1}{\lambda^2}
\sum_{n=2}^{N+1}\left(
\frac{1-a^{n-1}}{1-a}\right)>-1, 
\quad a=\frac{\lambda-1}{\lambda}\neq0, \mbox{ and } a \neq1
\]
\[
\iff
\frac{N+1}{\lambda}-\frac{1}{\lambda^2}
\sum_{n=1}^{N+1}\left(
\frac{1-a^{n-1}}{1-a}\right)>-1,\
\mbox{ since }
\frac{1}{\lambda}=\frac{1}{\lambda^2}\cdot \frac{1}{1-a}
\]
\[
\iff\frac{N+1}{\lambda}+1>
\frac{1}{\lambda}\sum_{n=1}^{N+1}
(1-a^{n-1})\iff
1 >-\frac{1}{\lambda}
\left(
\frac{1-a^{N+1}}{1-a}
\right)
\]
\[
\iff 1>
( 
a^{N+1}-1
) \iff 
a^{N+1}<2\iff 
a<2^{\frac{1}{N+1}}
\]

\[\iff
1-\frac{1}{\lambda}< 
 2^{\frac{1}{N+1}}
  \iff
  1-2^{\frac{1}{N+1}}<
\frac{1}{\lambda}
\]
since $\lambda=2\frac{\theta}{G}+\frac{1}{2}>0$ and $1-2^{\frac{1}{N+1}}<0$
\[
\iff\lambda>
 \left( 1-2^{\frac{1}{N+1}}\right)^{-1}, 
\]
\[
\iff
\theta>
\left(
\frac{1}{1-2^{\frac{1}{N+1}}}-\frac{1}{2}
\right)\frac{G}{2}.
\]
However, $\left(
\frac{1}{1-2^{\frac{1}{N+1}}}-\frac{1}{2}
\right)\frac{G}{2}<0$, therefore if $\theta>0$, $\theta>\left(
\frac{1}{1-2^{\frac{1}{N+1}}}-\frac{1}{2}
\right)\frac{G}{2}$ and so 
$1+G\cdot \bm{e}^TA^{-1} \bm{e}\neq0$.
Thus, if  $\theta>0$ and $G>0$ then $\Gamma_{\theta}+\widetilde{\Gamma}$ is non singular.

\section{Proofs of the results.}
\label{sec_app_proof}

\begin{lem}\label{lem_inv_gamma}
Let $\theta>0$, then the inverse of the following matrices
$$(\Gamma_{\theta}-\widetilde{\Gamma}) =
G_1\begin{bmatrix}
\frac{2\theta}{G_1}+\frac{1}{2} & 1 & 1
&\cdots  &1 &1 \\
0&\frac{2\theta}{G_1}+\frac{1}{2} & 1   &\cdots &1 &1\\
0 &\ddots&\ddots &\ddots &\ddots  &\vdots  \\
\vdots &\ddots&\ddots &\ddots &\ddots  &\vdots  \\
\vdots &\ddots&\ddots &\ddots &\frac{2\theta}{G_1}+\frac{1}{2}  & 1  \\
0&\cdots &\cdots&\cdots &0 &\frac{2\theta}{G_1}+\frac{1}{2}  
\\
\end{bmatrix}
$$
$$(\Gamma_{\theta}+\widetilde{\Gamma}) =
G_1\begin{bmatrix}
\frac{2\theta}{G_1}+\frac{3}{2} & 1 & 1
&\cdots  &1 &1 \\
2&\frac{2\theta}{G_1}+\frac{3}{2} & 1   &\cdots &1 &1\\
2 &\ddots&\ddots &\ddots &\ddots  &\vdots  \\
\vdots &\ddots&\ddots &\ddots &\ddots  &\vdots  \\
\vdots &\ddots&\ddots &\ddots &\frac{2\theta}{G_1}+\frac{3}{2}  & 1  \\
2&\cdots &\cdots&\cdots &2 &\frac{2\theta}{G_1}+\frac{3}{2}  
\\
\end{bmatrix}
$$
are given by the matrices,
$$
(\Gamma_{\theta}-\widetilde{\Gamma})^{-1}=\frac{1}{G_1} 
\begin{bmatrix}
\frac{1}{\lambda}&
-\frac{1}{\lambda^2}&
-\frac{\lambda-1}{\lambda^3}&\cdots&
 -\frac{(\lambda-1)^{N-2}}{\lambda^N}&
-\frac{(\lambda-1)^{N-1}}{\lambda^{N+1}} 
  \\
0&   \frac{1}{\lambda}&-\frac{1}{\lambda^2}&\cdots & -\frac{(\lambda-1)^{N-3}}{\lambda^{N-1}}&
  -\frac{(\lambda-1)^{N-2}}{\lambda^{N }} 
  \\
 0& \ddots & \ddots & \ddots &\ddots &\vdots \\
  \vdots& \ddots & \ddots & \ddots &\ddots &\vdots \\
  \vdots& \ddots & \ddots & \ddots &\frac{1}{\lambda} &-\frac{1}{\lambda^2} \\
 0& \cdots & \cdots & \cdots & 0 &\frac{1}{\lambda} 
\\
\end{bmatrix},
$$

$$
(\Gamma_{\theta}+\widetilde{\Gamma})^{-1}=
(\Gamma_{\theta}-\widetilde{\Gamma})^{-T}-G_1\cdot\frac{
(\Gamma_{\theta}-\widetilde{\Gamma})^{-T}\bm{e} \bm{e}^T(\Gamma_{\theta}-\widetilde{\Gamma})^{-T}}
{1+G_1\cdot\bm{e}^T(\Gamma_{\theta}-\widetilde{\Gamma})^{-T}\bm{e}},
$$
 where $\lambda=\frac{2\theta}{G_1}+\frac{1}{2}$.
 \end{lem}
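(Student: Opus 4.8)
The plan is to prove the two displayed inverses in turn, obtaining the first by exploiting the Toeplitz triangular structure and deriving the second from it by a single rank-one update.

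For $(\Gamma_{\theta}-\widetilde{\Gamma})^{-1}$, the starting observation is that, because the kernel is constant, $\Gamma = G_1 \bm{e}\bm{e}^T$ and $\widetilde{\Gamma}+\widetilde{\Gamma}^T = \Gamma$ (off the diagonal the two contributions sum to $G_1$, on the diagonal to $\tfrac12 G_1+\tfrac12 G_1=G_1$). Hence $\Gamma_{\theta}-\widetilde{\Gamma} = \Gamma+2\theta I-\widetilde{\Gamma} = \widetilde{\Gamma}^T+2\theta I = G_1(\lambda I+U)$, where $U\in\R^{(N+1)\times(N+1)}$ has $U_{ij}=1$ for $j>i$ and $0$ otherwise and $\lambda = 2\theta/G_1+1/2$; this already produces the displayed triangular form of the matrix. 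To invert $\lambda I+U$ I would write $U=(I-J)^{-1}-I$, with $J$ the nilpotent upper shift ($J_{i,i+1}=1$, $J^{N+1}=0$), so that $(I-J)(\lambda I+U) = (\lambda-1)(I-J)+I = \lambda I-(\lambda-1)J$. Inverting the right-hand side by the finite Neumann series $\bigl(\lambda I-(\lambda-1)J\bigr)^{-1} = \lambda^{-1}\sum_{k=0}^{N} a^k J^k$ with $a=(\lambda-1)/\lambda$, then right-multiplying by $(I-J)$ and rearranging the two shifted sums (using $J^{N+1}=0$ and $1-a=1/\lambda$), one gets $(\lambda I+U)^{-1} = \lambda^{-1}I-\lambda^{-2}\sum_{k=1}^{N} a^{k-1}J^k$; dividing by $G_1$ gives the stated matrix, since its $(i,i+k)$ entry is $-\lambda^{-2}a^{k-1}/G_1 = -(\lambda-1)^{k-1}/(G_1\lambda^{k+1})$. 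Equivalently, one may simply verify the claimed inverse by direct multiplication: for $j-i=m\geq 1$ the $(i,j)$ entry of the product is $-\tfrac1\lambda+\tfrac1\lambda=0$ once one uses the geometric identity $\tfrac{a^{m-1}}{\lambda}+\tfrac{1}{\lambda^2}\sum_{l=1}^{m-1}a^{l-1} = \tfrac1\lambda$.

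For $(\Gamma_{\theta}+\widetilde{\Gamma})^{-1}$, transposing the relation above gives $(\Gamma_{\theta}-\widetilde{\Gamma})^T = \widetilde{\Gamma}+2\theta I$, hence
\[
\Gamma_{\theta}+\widetilde{\Gamma} = \Gamma+2\theta I+\widetilde{\Gamma} = (\widetilde{\Gamma}+2\theta I)+G_1\bm{e}\bm{e}^T = (\Gamma_{\theta}-\widetilde{\Gamma})^T+G_1\bm{e}\bm{e}^T,
\]
which exhibits $\Gamma_{\theta}+\widetilde{\Gamma}$ as a rank-one perturbation of the invertible matrix $(\Gamma_{\theta}-\widetilde{\Gamma})^T$. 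Applying the Sherman--Morrison formula with $A=(\Gamma_{\theta}-\widetilde{\Gamma})^T$, $\bm{u}=G_1\bm{e}$, $\bm{v}=\bm{e}$ and $A^{-1}=(\Gamma_{\theta}-\widetilde{\Gamma})^{-T}$ yields precisely the stated expression; the denominator $1+G_1\bm{e}^T(\Gamma_{\theta}-\widetilde{\Gamma})^{-T}\bm{e}$ is nonzero by the computation in Appendix \ref{app_1} (equivalently, because $\Gamma_{\theta}+\widetilde{\Gamma}$ is invertible for $\theta>0$, as recalled in Section \ref{sec_ne_primary_model}).

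The only genuinely computational step is the inversion of $\lambda I+U$; everything else is bookkeeping with transposes and one application of Sherman--Morrison. I expect the main care to be needed in the rearrangement of the Neumann series that produces the $k$-th superdiagonal coefficients (or, in the direct-verification route, in correctly setting up the geometric sum over the superdiagonals of the two Toeplitz matrices).
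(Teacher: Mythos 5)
Your proof is correct and follows essentially the same route as the paper: the paper dismisses the triangular inverse as ``straightforward'' (deferring to Proposition 3 of Schied--Zhang) and obtains $(\Gamma_{\theta}+\widetilde{\Gamma})^{-1}$ from exactly the rank-one decomposition $(\Gamma_{\theta}-\widetilde{\Gamma})^{T}+G_1\bm{e}\bm{e}^{T}$ via Sherman--Morrison, as you do. Your only addition is to actually carry out the first inversion explicitly (via the nilpotent-shift Neumann series, or direct verification), which correctly fills in the step the paper omits.
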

\begin{proof}[Proof of Lemma \ref{lem_inv_gamma}]
The computation of $(\Gamma_{\theta}-\widetilde{\Gamma})^{-1}$ is 
straightforward, see also proof of Proposition 3 of \cite{schied2018market}.
For $(\Gamma_{\theta}+\widetilde{\Gamma})^{-1}$
we may use the Sherman–Morrison formula.
Indeed, $(\Gamma_{\theta}+\widetilde{\Gamma})=
(\Gamma_{\theta}-\widetilde{\Gamma})^T+G_1\bm{e}\bm{e}^T
$, then since $(\Gamma_{\theta}-\widetilde{\Gamma})^T$
is non singular by the Sherman–Morrison formula we have the results,
$(\Gamma_{\theta}+\widetilde{\Gamma})^{-1}=
(\Gamma_{\theta}-\widetilde{\Gamma})^{-T}-G_1\cdot\frac{
(\Gamma_{\theta}-\widetilde{\Gamma})^{-T}\bm{e} \bm{e}^T(\Gamma_{\theta}-\widetilde{\Gamma})^{-T}}
{1+G_1\cdot\bm{e}^T(\Gamma_{\theta}-\widetilde{\Gamma})^{-T}\bm{e}}.$
\end{proof} 

 \begin{proof}[Proof of Proposition \ref{direc_solut_symm}]
    Without loss of generality we may assume that $G_1=1$. Indeed, let $\bm{\xi}_{1,\cdot}$ and $\bm{\xi}_{2,\cdot}$ be the admissible strategy for $X_1$ and $X_2$ respectively. Then, since $G_1$ is constant, we may scale the trading strategies by $G_1$. In particular, we may introduce 
    $\bm{\eta}_{i,\cdot}=G_1\cdot \bm{\xi}_{i,\cdot}$ for $i=1,2$, i.e.,
    the corresponding admissible strategies for the transformed inventory $Y_i=G_1\cdot X_i $,  $i=1,2$. Then, the two games are equivalent since,
   $  S_t^{\Xi}= S_t^0 -\sum_{t_k <t} G_1 (\xi_{1,k}+\xi_{2,k})=S_t^0 -\sum_{t_k <t} (\eta_{1,k}+\eta_{2,k}), 
   \ \forall\ t \in \numberset{T}$. 
   
  If we denote $(\Gamma_{\theta}-\widetilde{\Gamma})=A$,
   then by Lemma \ref{lem_inv_gamma},
   $(\Gamma_{\theta}+\widetilde{\Gamma})^{-1}=A^{-T}-\frac{
A^{-T}\bm{e} \bm{e}^TA^{-T}}
{1+\bm{e}^TA^{-T}\bm{e}}.$
However, 
$$(\Gamma_{\theta}+\widetilde{\Gamma})^{-1}\bm{e}=A^{-T}\bm{e}-A^{-T}\bm{e}\cdot\frac{
 \bm{e}^TA^{-T}\bm{e}}
{1+\bm{e}^TA^{-T}\bm{e}}=A^{-T}\bm{e}\cdot \bigg(1-\frac{
 \bm{e}^TA^{-T}\bm{e}}
{1+\bm{e}^TA^{-T}\bm{e}}\bigg)= \frac{
 A^{-T}\bm{e}}
{1+\bm{e}^TA^{-T}\bm{e}}.$$
   Thus, $\bm{e}^T(\Gamma_{\theta}+\widetilde{\Gamma})^{-1}\bm{e}=\frac{
 \bm{e}^T A^{-T}\bm{e}}
{1+\bm{e}^TA^{-T}\bm{e}}$, then
  \[
  \begin{split}
  \bm{v}&=\frac{1}{\bm{e}^T (\Gamma_{\theta}+\widetilde{\Gamma})^{-1}\bm{e}}(\Gamma_{\theta}+\widetilde{\Gamma})^{-1}\bm{e}=
  \frac{
 A^{-T}\bm{e}}
{\bm{e}^TA^{-T}\bm{e}}
  \\
     \bm{w}&=\frac{1}{\bm{e}^T (\Gamma_{\theta}-\widetilde{\Gamma})^{-1}\bm{e}}(\Gamma_{\theta}-\widetilde{\Gamma})^{-1}\bm{e}=
     \frac{
  A^{-1}\bm{e}}
{\bm{e}^TA^{-1}\bm{e}}.
        \end{split}\]
Moreover, if we denote $A^{-1}\bm{e}=\bm{x}$ and using the explicit formula for $A^{-1}$ we obtain that
$x_{N+1}=\frac{1}{\lambda}$, $x_n=\frac{1}{\lambda}-
\frac{1}{\lambda^2}\sum_{k=0}^{N-n} \left(\frac{\lambda-1}{\lambda}\right)^k=
\frac{1}{\lambda}-
\frac{1}{\lambda}(1-a^{N-n+1})=\frac{a^{N-n+1}}{\lambda}
$, for $n=1,\ldots,N$, where $\lambda=2\theta+\frac{1}{2}$ and $a=\frac{\lambda-1}{\lambda}$. Then, 
$\sum_{n=1}^{N+1}x_n=
\frac{a^N}{\lambda}
\sum_{n=1}^{N+1}a^{-(n-1)}=\frac{a^N}{\lambda}
\sum_{k=0}^{N}a^{-k}=
\frac{a^{N}}{\lambda}\cdot\frac{1-a^{-(N+1)}}{1-a^{-1}}=1-a^{N+1}$, since $1-a=\frac{1}{\lambda}$.
Therefore, $w_{N+1}=\frac{1}{\lambda \cdot (1-a^{N+1})}$ and 
$w_n=\frac{a^{N-n+1}}{\lambda \cdot (1-a^{N+1})}$ for $n=1,2,\ldots,N$. On the other hand, if 
    $A^{-T}\bm{e}=\bm{y}$,
$y_{1}=\frac{1}{\lambda}$, $y_n=\frac{1}{\lambda}-
\frac{1}{\lambda^2}\sum_{k=0}^{n-2} \left(\frac{\lambda-1}{\lambda}\right)^k=
\frac{1}{\lambda}-
\frac{1}{\lambda}(1-a^{n-1})=\frac{a^{n-1}}{\lambda}
$, for $n=2,\ldots,N+1$. Then, 
$\sum_{n=1}^{N+1}y_n=
\frac{1}{\lambda}
\sum_{n=1}^{N+1}a^{n-1}=\frac{1-a^{N+1}}{(1-a)\lambda}=1-a^{N+1}$.
Therefore, $v_1=\frac{1}{\lambda \cdot (1-a^{N+1})}$ and 
$v_n=\frac{a^{n-1}}{\lambda \cdot (1-a^{N+1})}$ for $n=2,\ldots,N+1$.
    \end{proof}

\begin{proof}[Proof of Theorem \ref{NE_symm}]
Without loss of generality, we may assume that the first agent is the Directional, so that $X_1=X_{direc}$ and $X_2=X_{arbi}=0$.
From \cite{schied2018market} the Nash equilibrium is 
provided by Eq. \eqref{eq_1_S&Z}-\eqref{eq_2_S&Z}, and we have Eq. 
\eqref{eq_1_S&Z_te}-\eqref{eq_2_S&Z_te}.
Then, since $\bm{v}$ and $\bm{w}$ are time-symmetric from Lemma \ref{direc_solut_symm} we have 
\[
\begin{split}
{v}_1\pm{w}_1 &=
{w}_{N+1}\pm{v}_{N+1}\\
{v}_k\pm{w}_k &=
{w}_{N+2-k}\pm{v}_{N+2-k}, \quad k=1,2,\ldots,N+1
\end{split}
\]
and it is straightforward to verify Eq. \eqref{eq_symm_NE_onedim}-\eqref{eq_symm_NE_2_onedim}
\end{proof}

\begin{proof}[Proof of Corollary \ref{cor_Ushape}]
W.l.o.g. we may assume $G_1=1$.
From Theorem \ref{NE_symm} we have the time-symmetry. 
On the other hand,
from the characterization of the fundamental solution
$\bm{v}$ and $\bm{w}$, see Proposition 
\ref{direc_solut_symm}, 
$\bm{\xi}_{direc,k}^*=\frac{X_{direc}}{2}\cdot \frac{a^{k-1}+a^{N-k+1}}{\lambda (1-a^{N+1})}$, 
where $\lambda=2\theta+1/2$ and $a=1-1/\lambda$.
Since $\theta >1/4$, then $\lambda >1$ and $a\in (0,1)$ so each component of the Nash equilibrium is positive. 
Let us denote with $\widetilde{\xi}_k$ the components of 
$\widetilde{\bm{\xi}}=\frac{2 \bm{\xi}_{direc,k}^*}{X_{direc}} $.
Let $\Delta_k=\widetilde{\xi}_{k+1}-\widetilde{\xi}_k$
be the first difference for $k=1,2,\ldots,N$. Then, since 
$\lambda (1-a^{N+1})>0$,
$\Delta_k=\frac{(a-1)(a^{k-1}-a^{N-k})}{\lambda (1-a^{N+1})}=
\frac{(a-1) a^{-k} (a^{2k-1}-a^N)}{\lambda (1-a^{N+1})}<0$
for $k<\lfloor{(N+1)/2}\rfloor$. For the convexity we consider 
the first difference of $\Delta_k$, i.e., $\Delta_k-\Delta_{k-1}$
for $k=2,3,\ldots,N$. However,
$\Delta_k-\Delta_{k-1}=\frac{(a-1)(a^{k-1}-a^{N-k})-(a-1)(a^{k-2}-a^{N-k+1})}{\lambda (1-a^{N+1})}=\frac{(a-1)(a^{k-1}-a^{N-k}-a^{k-2}+a^{N-k+1})}{\lambda (1-a^{N+1})}=\frac{(a-1)(a-1)(a^{k-2}+a^{N-k})}{\lambda (1-a^{N+1})}>0$ and we conclude.
\end{proof}

\begin{proof}[Proof of Theorem\ref{te_alter_defin}]
If $\Xi_1\neq0$ the matrix $\mathcal{M}$ is non singular.
\end{proof}

 \begin{proof}[Proof of Theorem \ref{id_intrinsic_decay}]
  Let us first consider the case when $N+1$ is even.
  From the left-hand side, since $\Pi$ is symmetric and $\bm{\xi}^*_{direc}$
  is time symmetric, the first $(N+1)/2$ equations are equal to the last $(N+1)/2$, where the first is equal to the $N+1$-th, 
  the second is equal to the $N$-th and so on.
  This, means that $\mbox{rk}(H)\leq (N+1)/2$. However, since $\bm{\xi}_{direc}^*$ has a U-shape, 
   in particular it is not constant but strictly decreasing with positive elements, and $\bm{g}$ 
   is strictly decreasing, then the first $(N+1)/2$ equations are different from each other, 
  since for each equation there are mixed products which are not contained in the remaining equations. Thus, $\mbox{rk}(H)=(N+1)/2$.
  The case of $N+1$ odd is straightforward, since using the same reasoning of the even case, we obtain that the first $N/2$ equations are equal to the last $N/2$ where
  in addition, we have another equation at the $N/2+1$ coordinate.
  So, $\mbox{rk}(H)=N/2+1.$
 \end{proof}
 \begin{proof}[Proof of Theorem
 \ref{te_linear_beta_results}]
 
From the discussion at the beginning of Section \ref{sec_linear_TIM} it is sufficient to show that 
the components of $\bm{\xi}^*_{direc}$ satisfy the relations \eqref{cond_linear_decay}.

W.l.o.g. we may assume the impact function of the market impact game be equal to $G_1=1$.
 From the characterization of the fundamental solutions 
 $\bm{v}$ and $\bm{w}$, see Proposition \ref{direc_solut_symm},
 ${\xi}_{direc,k}^*=\frac{X_{direc}}{2}\cdot \frac{a^{k-1}+a^{N-k+1}}{\lambda (1-a^{N+1})}$, 
 where $\lambda=2\theta+\frac{1}{2}$ and $a=1-1/\lambda$.
 So, let $\xi_k$ be the components of $\bm{\xi}_{direc}^*$, then, 
 for $k\leq\lfloor{N/2+1}\rfloor$,
 \[
 \frac{(\xi_{k-1}-\xi_k)}{X_{direc}-2\sum_{i=1}^{k-1}\xi_i}=
\frac{a^{k-2}+a^{N-k+2}-a^{k-1}-a^{N-k+1} }{2\lambda (1-a^{N+1})}\cdot\frac{1}{1-\sum_{i=1}^{k-1} \frac{a^{i-1}+a^{N-i+1}}{\lambda (1-a^{N+1})}   }=
 \]
 \[
 =\frac{a^{k-2}+a^{N-k+2}-a^{k-1}-a^{N-k+1} }{2\lambda (1-a^{N+1})}\cdot\frac{\lambda (1-a^{N+1})}{\lambda (1-a^{N+1})-\sum_{i=1}^{k-1} (a^{i-1}+a^{N-i+1})   }=
 \]
 \[
 =\frac{1}{2}\cdot \frac{a^{k-2}+a^{N-k+2}-a^{k-1}-a^{N-k+1} }{\lambda (1-a^{N+1})-\sum_{i=1}^{k-1} (a^{i-1}+a^{N-i+1})   }.
 \]
 Since $\sum_{i=1}^{k-1}a^{i-1}=(1-a^{k-1})/(1-a)$ and 
 $a^{N+1}\cdot\sum_{i=1}^{k-1} a^{-i}=(a^{N-k+2}-a^{N+1})/(1-a)$,
 and $\lambda(1-a)=1$
  \[
  \begin{split}
 &\frac{1}{2}\cdot \frac{a^{k-2}+a^{N-k+2}-a^{k-1}-a^{N-k+1} }{\lambda (1-a^{N+1})-\sum_{i=1}^{k-1} (a^{i-1}+a^{N-i+1})   }\\ =&\frac{1-a}{2}\cdot \frac{a^{k-2}+a^{N-k+2}-a^{k-1}-a^{N-k+1} }{\lambda(1-a) (1-a^{N+1})-(1-a^{k-1}+a^{N-k+2}-a^{N+1})  }
 \\
 =&
 \frac{1-a}{2}\cdot \frac{a^{k-2}+a^{N-k+2}-a^{k-1}-a^{N-k+1} }{a^{k-1}-a^{N-k+2} }=
 \frac{1-a}{2}\cdot \frac{a^{k-2}(1-a)-a^{N-k+1}(1-a) }{a^{k-1}-a^{N-k+2} }\\
 =&\frac{(1-a)^2}{2}\cdot \frac{a^{k-2}-a^{N-k+1}}{a^{k-1}-a^{N-k+2} } 
 =\frac{a(1-a)^2}{2a}\cdot \frac{a^{k-2}-a^{N-k+1}}{a^{k-1}-a^{N-k+2} } 
 =\frac{(1-a)^2}{2a}\cdot \frac{a^{k-1}-a^{N-k+2}}{a^{k-1}-a^{N-k+2} } \\
 =&\frac{(1-a)^2}{2a}.
\end{split}
 \]
 Therefore, $\frac{(\xi_{k-1}-\xi_k)}{X_{direc}-2\sum_{i=1}^{k-1}\xi_i}=\frac{(1-a)^2}{2a}$ and so it is independent for each $k\leq\lfloor{N/2+1}\rfloor$ and we conclude.
 \end{proof}

 \begin{proof}[Proof of Proposition \ref{prop_xi_spikes}]
From Example \ref{es_theta0} if $\theta_1=G_1/4$ then the Nash equilibrium of the market impact game is given by $\bm{\xi}_{direc}^*=[X_{direc}/2,0,\cdots,0 , X_{direc}/2]^T$.
Vice-versa, using the characterization of the Nash equilibrium, 
see Proposition \ref{direc_solut_symm} and Theorem \ref{NE_symm}, 
$\xi_{direc,1}^*=\frac{X_0}{2} \cdot \frac{(1+a^N)}{\lambda(1-a^{N+1})}$, where $a=1-1/\lambda$ and 
$\lambda=\frac{2\theta_1}{G_1}+\frac{1}{2}$, where $\theta_1\geq G_1/4$ so that $a\in [0,1)$.  Thus, if $\xi_{direc,1}^*=X_0/2$, then 
$1+a^N=\frac{1-a^{N+1}}{1-a}\iff 1+a^N=1+a+a^2+\cdots +a^N\iff
0=a+a^2+\cdots +a^{N-1}$, so $a=0$, i.e., $\theta_1=G_1/4.$
\end{proof}

\section{A Different Solution to Equation \eqref{eq_defi_alter_intrinsic_decay_kernel}}
\label{app_altern_approach2}
 
We now follow the same argument of Section \ref{sec_alternative_app} with a particular difference.

One could use $\bm{\xi}_1$ and $\bm{\xi}_2$, the Nash equilibrium solution of Schied and Zhang 
(e.g., using a linear constant decay kernel)
and obtain the (equilibrium) price dynamics $\bm{S}$,  from Equation \eqref{eq_alternative_system}.
Then, we ask the following question.
What is the impact faced by the only directional trader
when the directional does not consider the presence of other agents? 
Precisely, what is the propagator function inferred by directional traders if they solve the optimal execution problem in the standard Almgren-Chriss framework? 
In this case, the market impact considered by the directional is the same as that of the MIG, i.e., it is constant, even if the directional agent does not take into account any other competitors.
This is a substantial difference with respect to the previous approach.
Thus, once derived the optimal execution for the directional, $\bm{\xi}_{AC}=\frac{X_0}{N+1}\bm{e}$, we wonder what the market impact inferred by the trader is. To answer this question, we solve the Equation \eqref{eq_defi_alter_intrinsic_decay_kernel}, where $\Xi=\bm{\xi}_{AC}$,
 so that we can obtain the intrinsic decay kernel associated with the single-agent characterized by the price dynamics of the market impact games.

We first consider the same setting of Section \ref{sec_alternative_app}, wherein the market impact game there are the directional and one arbitrageur and the decay kernel is set to $G_1(t)\equiv1$. 
Then, we  compute the cumulative drift  $\bm{S}$ generated by the interaction of the two agents. We solve system \eqref{eq_defi_alter_intrinsic_decay_kernel} where $\mathcal{M}$ is computed by considering only the drift generated by the directional derived by the same model by without considering the presence of the arbitrageur, i.e., an 
Almgren and Chriss optimal schedule $\bm{\xi}_{AC}$ and 
we report in Figure \ref{fig:alternativeSZ2} the implied transient impact function $G_{intr}^{(P)}(t)$. 
Moreover, we compute $G_{intr}^{(P)}(t)$ when we consider a market impact game with $1$ directional trader and $2$ arbitrageurs using the \cite{luo_schied} model, where the decay kernel is set to $G_1(t)\equiv1$. 

As for the previous approach, we observe that the two implied transient impact kernels seem to differ in terms of ``size impact", i.e., the absolute value of $G_{intr}^{(P)}(0)$ is greater for the Luo and Schied game than those obtained starting from the Schied and Zhang model, 
but they exhibit the same shape. In particular, they are both positive and qualitatively decreasing functions.

The following motivation can interpret the shape of the fitted decay kernel.
Since the cumulative drift $\bm{S}$ is generated by the interactions of the fundamentalist and arbitrageur(s), it converges (in absolute value) to the inventory of the fundamentalist (1 in our example). Precisely,
due to the symmetry of Nash equilibrium,
the aggregate order flow  decreases to zero since in the last trading times, the arbitrageur(s) and the directional have opposite orders which compensate each other, see Figure \ref{fig:alternativeSZ_2}.
This shape is obtained by setting a constant market impact. Therefore when we consider a constant order flow, like the one obtained by the Almgren and Chriss optimal schedule $\bm{\xi}_{AC}$, we derive a strictly decreasing market impact function with an appropriate scaling factor provided by
$G(t_0)=(N+1) \cdot \Xi_1$.
Moreover, since we set $G_1(t)$ to a constant, the aggregate volume traded is directly proportional to the price dynamics, see Figure \ref{fig:alternativeSZ_2} right panels.

\begin{figure}[!t]
    \centering
    \subfloat[][]
    {\includegraphics[width=1\textwidth]{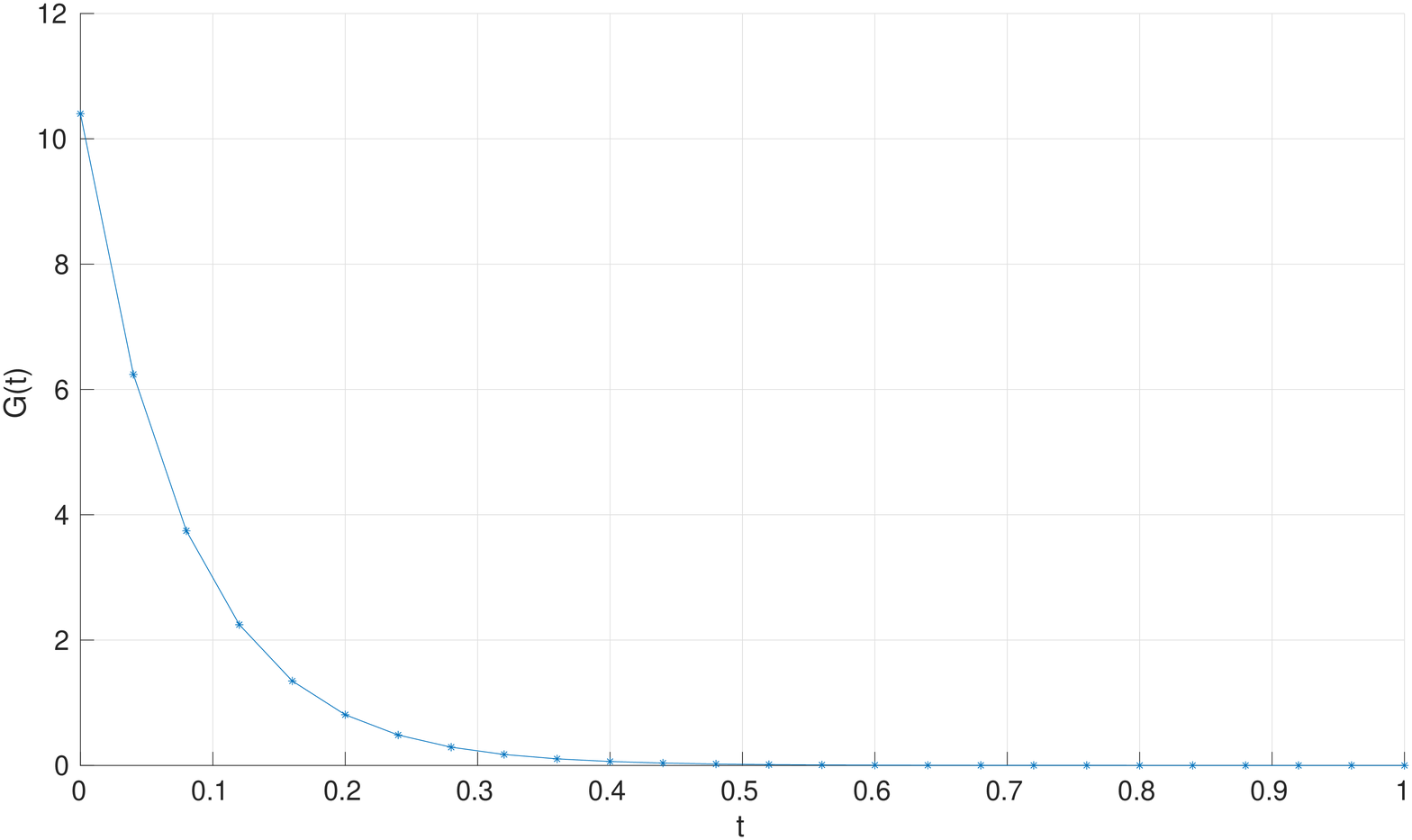}}\\
       \subfloat[][]{\includegraphics[width=1\textwidth]{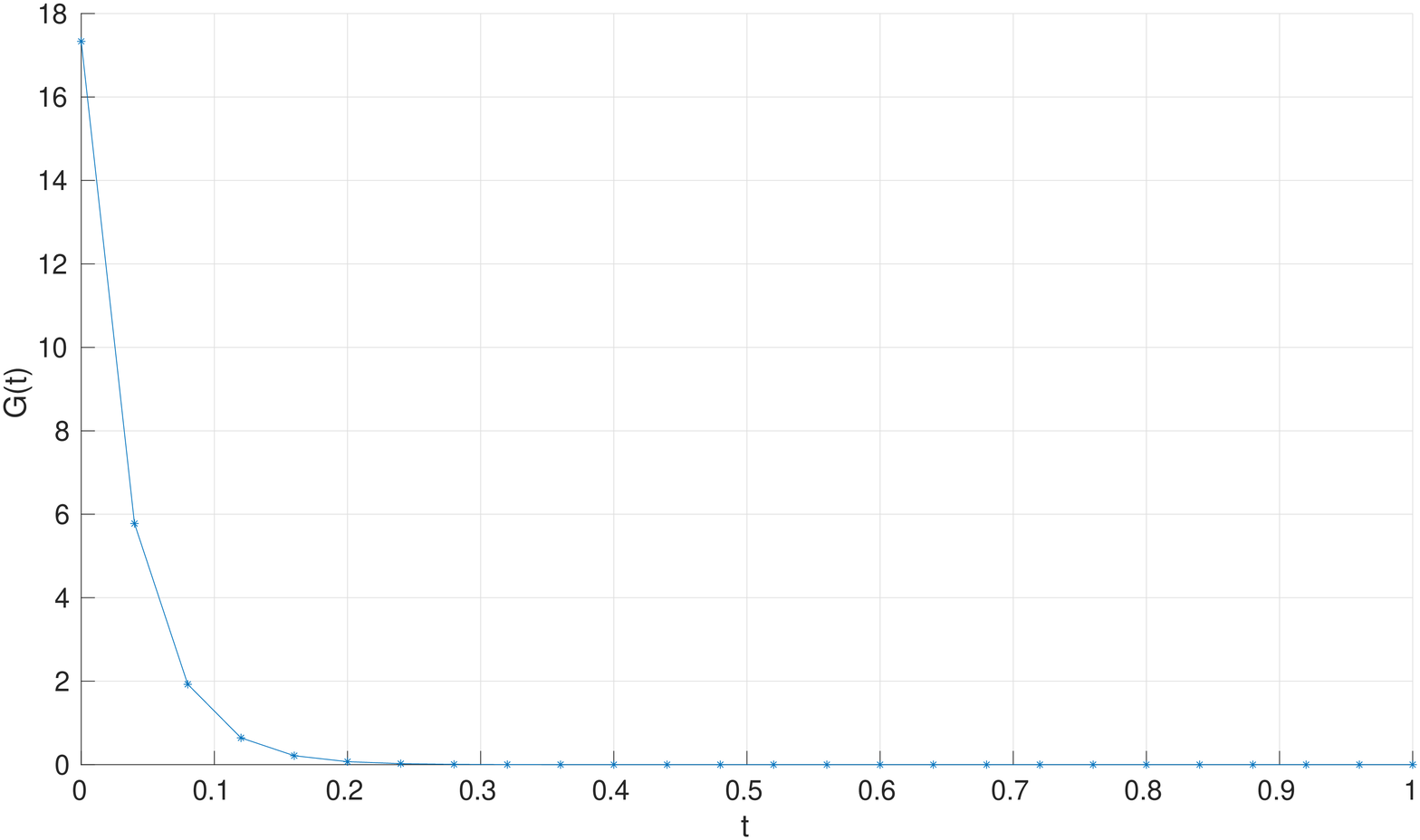}}
    \caption{The implied temporary impact function computed solving system \eqref{eq_defi_alter_intrinsic_decay_kernel} when we consider the cumulative drift generated by a Schied and Zhang game with a $1$ directional trader and $1$ arbitrageur, exhibit (a), and
    when we consider the cumulative drift generated by a Luo and Schied game with a $1$ directional trader and $2$ arbitrageur, exhibit (b).}
    \label{fig:alternativeSZ2}
\end{figure}

\begin{figure}[!t]
    \centering
    \subfloat[][]
    {\includegraphics[width=1\textwidth]{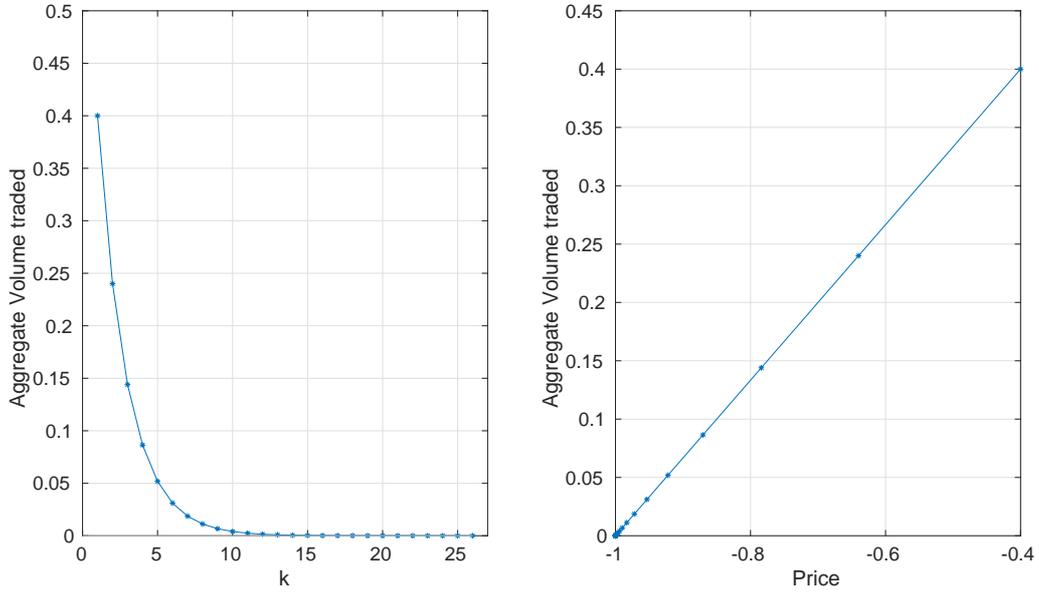}}\\
       \subfloat[][]{\includegraphics[width=1\textwidth]{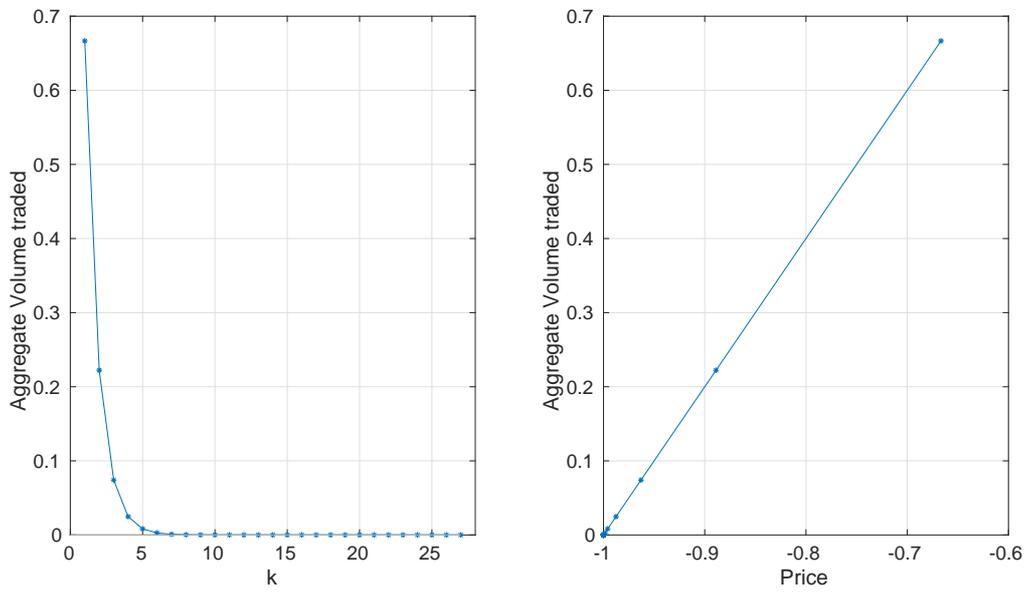}}
    \caption{Aggregate volume traded and its relation with price dynamics derived by a directional and an arbitrageur following the Schied and Zhang game and, panel (a), and by a directional and two arbitrageurs following a Luo and Schied game, panel (b).}
    \label{fig:alternativeSZ_2}
\end{figure}

\end{appendices}


\end{document}